\begin{document}

\title{Approximation Algorithms for the UAV Path Planning with Object Coverage Constraints}

\author{Jiawei Wang, Vincent Chau, Weiwei Wu\thanks{W. Wang, V. Chau, W. Wu are with the Department of Computer Science and Engineering, Southeast University, Nanjing, China. E-mail: \{220211985, vincentchau, weiweiwu\}@seu.edu.cn. This work is supported by the National Natural Science Foundation of China under Grant No. 62202100, 62232004 and the Natural Science Foundation of Jiangsu Province under Grant No.  BK20230024.}
}

\maketitle

\begin{abstract}
We study the problem of the Unmanned Aerial Vehicle (UAV) such that a specific set of objects needs to be observed while ensuring a quality of observation. Our goal is to determine the shortest path for the UAV. This paper proposes an offline algorithm with an approximation of $(2+2n)(1+\epsilon)$ where $\epsilon >0$ is a small constant, and $n$ is the number of objects. We then propose several online algorithms in which objects are discovered during the process.
To evaluate the performance of these algorithms, we conduct experimental comparisons. Our results show that the online algorithms perform similarly to the offline algorithm, but with significantly faster execution times ranging from 0.01 seconds to 200 seconds. We also show that our methods yield solutions with costs comparable to those obtained by the Gurobi optimizer that requires 30000 seconds of runtime.
\end{abstract}

\keywords{Approximation Algorithm, Objects Observation, Path planning}

\section{Introduction}
\label{intro}

Owing to the recent rapid development in microchips and various sensors, the growing popularization of Unmanned Aerial Vehicles (UAVs) is reflected in its extensive applications, such as structure inspection, smart farming, wildfire detection, cinematography, etc. As an intelligent and integrated platform, UAVs are capable of accomplishing tasks that are difficult or dangerous for human. For instance, an UAV can fly around a building and reconstruct it finely from the photos it took by following a planned path, which is hard for human operators~\cite{uav_structureinspection}. Being equipped with specific device, UAVs liberate farmers from laborious work by irrigating farmland automatically over a large area~\cite{smart_farm}. Moreover, UAVs can perform missions in disastrous areas~\cite{fire_detect} and elevate the photography effect to new height via approaching somewhere not available for human~\cite{uav_cinema}. 

Benefiting from its flexibility and maneuverability in practice, a UAV generally requires a trajectory in order to accomplish various tasks in which they fly. Correspondingly, coverage path planning aims to find a feasible path for UAV that covers an area or a set of Points of Interests (PoI) if satisfying some conditions, which can be a distance constraint or hardware parameters of device such as the camera angle. Distance constraint is that the UAV has a perception range with radius supported by cameras or radars and the target object should be within this distance from the UAV. In order to make a complete coverage of a certain space, a wide range of algorithms were proposed in literature, planning feasible paths so as to cover PoIs or to cover the space by summing up the area swept over during its flight. 

In this work, we focus on a realistic situation in which a UAV is expected to cover a set of objects. Besides, every side of the object has to be observed, implying that it cannot be treated merely as a point cover problem. Such a problem can be widely extended to building inspection in which a path guides the UAV to take photos of each side of the buildings. The offline and online algorithms proposed will reduce the need for human intervention and enhance the efficiency and thoroughness of the coverage task, leading to more systematic and dependable data gathering. 

\section{Related Work} 
\label{related-work}

Given that our work plans a path for a UAV in a coverage problem, we present the related works in three parts. The first part is devoted to different methods of UAV path planning. Then the focus is transferred to UAV coverage problem, in which the main task is to find a path that achieves effective coverage. Finally, we briefly introduce several algorithms of the classical Travelling Salesman Problem (TSP). 

\subsection{UAV path planning}
UAV path planning is essential for the control of unmanned aerial vehicles, which are able to fly automatically under the planned path. There are numerous methods proposed in the literature to devise efficient paths for UAVs and they can be classified into the following categories: sampling based algorithms, mathematical model based algorithms, bio-inspired algorithms and artificial intelligence algorithms. Sampling based algorithms require the information of the map to be known in advance and sample or divide the map into a set of nodes. The concrete technique can be represented by rapid-exploring random trees (RRT), RRT-star, A-star. Mathematical methods mainly include Lyapunov function~\cite{Lyapunov}, which is used to maintain the stability of UAVs, linear programming, which integrate all the cost factors with Hamiltonian function to search for an optimal path, as described in~\cite{math-planning}, and Beziere curve which further consider the smoothness of the flight path. Bio-inspired technique~\cite{bio-planning} adopts evolutionary ideas by selecting a path as a parent path and generating new path via mutation and crossover. Selection process is led by the adaptive value of offspring. AI-based methods are currently studied greatly. Traditional machine learning models, such as k-means, used to determine the path in complex multi-task environment via clustering the target points~\cite{k-means-planning}. Neural network enables the UAV to perceive the environment by vision and thus solve the navigation and position problems~\cite{NN-planning1},~\cite{NN-planning2}. Apart from neural network, reinforcement learning is used extensively in path planning where Q-learning is carried out in~\cite{q-learning-planning} and G-learning is proposed in~\cite{g-learning-planning}. In order to further enhance the learning efficiency of these two methods, Lei et al.\cite{DRL} modeled the UAV navigation problem as a Markov decision process and proposed a model interpretation method based on feature attributes to explain the behavior of UAV in the process of path planning. Similarly, Xie et al.\cite{DRL2} expresses path planning as a partially observable Markov process, constructs RNN to solve partially observable problems, and uses reward value and action value to reduce meaningless exploration.

\subsection{UAV coverage}
As a sub-problem of UAV path planning, UAV coverage additionally requires UAVs to achieve a coverage task, as the name indicated. One type is regional coverage, which assumes that a region is covered when it is in the perception area of the UAV. A relatively conventional algorithm is presented in~\cite{cellular} where Nam et al. used approximate cellular decomposition with criteria on both length and the number of turns on the path. Yao et al.\cite{river_rescue} proposed an offline path planning method based on Gaussian mixual model and heuristic prioritization, which is aimed at maximizing the probability of finding the lost target in a river rescue mission. Xie et al.~\cite{multi_polygon} solves the problem of multiple polygon regions by integrating covering a single polygon and traveling salesman problem. These are the methods for one UAV and algorithms built on multiple UAVs are as follows. Jing et al.~\cite{uav_structureinspection} navigate the UAV to inspect a large complex building structure by combining set covering problem and vehicle routing problem. Rapidly exploring random tree, as indicated by its name, is broadly used to explore an unknown region with high efficiency. In~\cite{surveillance}, a cooperative surveillance task was achieved by multiple UAVs where RRT was modified to find feasible trajectories passing suitable observing locations, on which particle swarm optimization is then performed. Focusing on the coverage path planning of heterogeneous UAVs, the authors in \cite{hetero-cover} established the UAV and regional model before using linear programming to accurately provide the best point-to-point flight path for each UAV. Then, inspired by the foraging behavior of ants, a heuristic algorithm based on ACS is proposed to search the approximate optimal solution and minimize the time consumption of tasks in the cooperative search system.

The other type is object-oriented, where the problem aims to cover a set of objects and the condition satisfied when the distance between the UAV and an object is less than a threshold. In this case, effective coverage can be treated as a binary in that 1 for covered and 0 for not covered and this can find applications in the context of wireless sensor networks. Huang et al.\cite{seuwsn} embedded turning angles and switching numbers during retrieving data into graph structure and obtained a path for a UAV via generalized TSP solutions. Gong et al.~\cite{straightwsn} completed the data collection mission in a straight line situation which minimized the UAV's total flight time via dynamic programming while retrieving a certain amount of data from each sensor. After knowing the possible flying path, Yang et al.\cite{bio-inspired} combined a genetic algorithm and ant colony optimization to select the best path for data collection. In~\cite{robotcover}, the agents regard moving objects and obstacles as disks of different sizes and the goal is to find a collision-free coverage path in a dynamic changing environment while ensuring path smoothness. As for multiple UAVs, Alejo et al.~\cite{wsn} integrated online RRT with genetic algorithms, guiding several UAVs collecting data simultaneously from sensors randomly distributed. A multi-agent architecture is designed in~\cite{multi-agent-catastrophe} for UAVs to patrol in a region and monitor key ground facilities.

\subsection{Travelling Salesman Problem (TSP)}
The Travelling Salesman Problem and its variants are thoroughly studied in literature. The problem is to find a path that starts and ends at the same place, such that a set of places need to be visited once. Authors in~\cite{DFJ-tsp,MTZ-tsp,GB-tsp,GP-tsp} designed different forms of integer linear programming for the problem. Christofides~\cite{1.5TSP} obtained a 3/2 approximation by combining a minimum spanning tree of the original graph and the best matching of the vertices with odd degree. Nearest neighbor algorithm~\cite{nearest-tsp} constructs the path by continuously adding the new vertex closest to the current one. Cheapest Insertion~\cite{cheapest-insertion} enlarges the path via inserting new node to the path with the lowest insertion cost and its approximation ratio is 2. Holland~\cite{genetic-tsp} employed genetic algorithms by fostering path offspring with the best adaptive score. Note that the TSP is an NP-hard problem, therefore we are interested in finding approximate solutions in polynomial time.

\subsection*{Contributions}

The literature summarized above admittedly achieved satisfying outcomes in different circumstances, yet few of them truly yields theoretical results suggesting the degree to which they are efficient relative to the optimal solutions. In our work, however, a result with an approximation ratio is obtained.
\\Our contributions are summarized as follows:
\begin{enumerate}
\item We design an Integer Linear Programming (ILP) that is $(1+\varepsilon)$-approximate with $\varepsilon>0$.
\item An offline method is proposed which achieves an approximation ratio of $(2+2n)(1+\epsilon)$ in theory and of around 2.1 in experiments.
\item Based on the offline method, we propose three online heuristics.
\item We carry out various numerical experiments and the results show that our methods can achieve close performance to that of the ILP solver at much faster speed.
\end{enumerate}

The structure of the rest of the paper is organized as follows. %We will make a summary of the related works from three parts in the rest of Section \ref{related-work}. 
In Section \ref{system-model}, the system model is presented followed by the formalization of the problem. The detail of our methods is the described in Section \ref{methodology}, including all four methods and the design of integer linear programming. The process and results of the experiments are presented in Section \ref{experiments}.

\section{System Model and Problem Formulation}
\label{system-model}
In this section, we first present the description of the system model, followed by the formalization of the goal of the problem, which aims to find a path for the UAV in both offline and online scenarios that minimizes the total length of the flying path while ensuring that all objects have been observed. After that, we describe the process to generate the observation points via area discretization and present the integer linear programming form of the problem at last.

\subsection{System Model}

Consider a two-dimensional space in which there are $n$ objects and a UAV equipped with a camera that can move freely in this space. The objects set is referred to as ${O}= \{o_1,o_2,o_3...o_n\}$. The location and the scale of each object $o_i$, which is rectangular, are described by $(x_{o_i}, y_{o_i})$, $(l_{o_i}, w_{o_i})$ where $x_{o_i}, y_{o_i}$ are the objects' coordinates and $l_{o_i}, w_{o_i}$ are respectively the length and width. We further define the midpoint of a side of each object to be $q_r$ where $r \in [1, 4n]$. The set of all the midpoints is $Q$. In the online setting, the UAV has a perception range and will gradually detect objects during its flight. Since the volumes of the objects are not negligible, all the four sides of an object have to be covered. Besides, the UAV cannot observe the side behind it because it is blocked by the front side. In order to ensure the observation quality of each side, the UAV is not allowed to take photos from an over-deviated orientation, which is supported by the fact that the most information is obtained when the camera is facing directly against an object. Moreover, since the distance also affects the observing quality, the UAV cannot observe the object when the distance exceeds a threshold. Concretely, let $a$, $b$ be the two endpoints of the side, $p$ the location of the UAV, and $t$ the normal vector of the side. As illustrated in Figure~\ref{img:efficient_observation}, the UAV can efficiently observe a side of an object if and only if it satisfies Definition~\ref{def:efficient_observe}. We adapt the definition to our problem, which is based on ~\cite{deploy} in order to emphasize the quality of observation.

\begin{definition}[Efficient observation]\label{def:efficient_observe} 
Let $a$, $b$ be the end points of a side of an object, $\vec{t}$ be the normal vector of the side, then the side is efficiently observed by the UAV at an observation point $p_i$ if and only if $\alpha(\vec{t}, \overrightarrow{p_i a}) \leq \theta$, $\alpha(\vec{t}, \overrightarrow{p_i b}) \leq \theta$, $distance \;||p_i a|| \leq d_{max} $ and $distance \;||p_i b|| \leq d_{max} $ where $\theta$ is the maximum observation angle, $d_{max}$ the observation range, and $\alpha(,)$ the angle between two vectors.
\end{definition}

\begin{figure}[htbp]
\centerline{\includegraphics[width=0.4\textwidth]{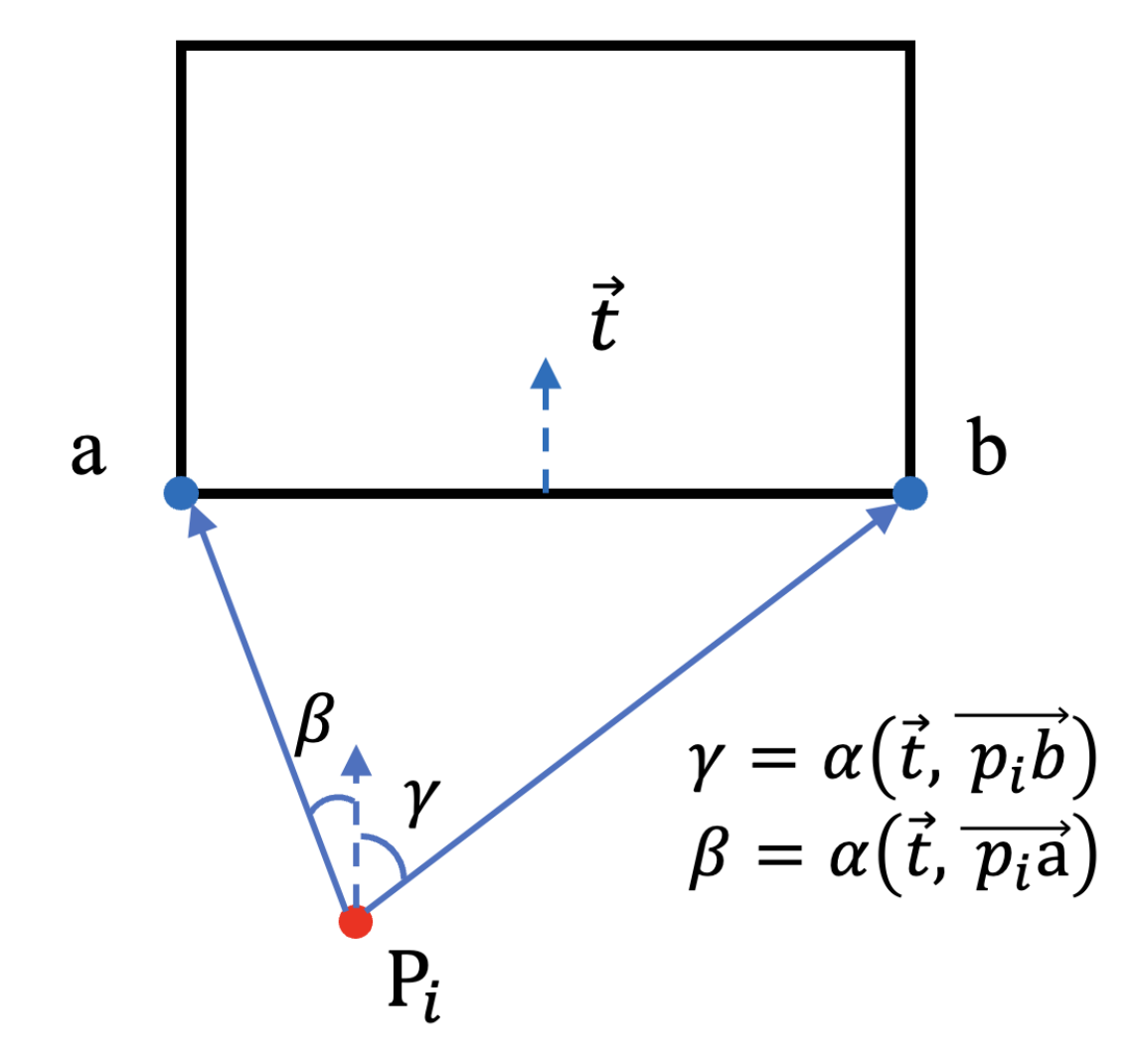}}
\caption{Efficient observation that satisfies both the distance and the angle constraints.}
\label{img:efficient_observation}
\end{figure}

Given that there is only one UAV expected to cover all these objects, the task is to minimize the path length while satisfying the coverage constraint, therefore it cannot be solved directly by a TSP algorithm which does not take into account the characteristics of the observation points. There exist some cases where the observation area of several objects overlaps, hence at some points the UAV is able to observe multiple objects by rotating its camera. On account of the fact that the layout of the objects can be any form, it is challenging to find a general solution selecting the appropriate observation points and planning the path.

\subsection{Discretization of observation points} \label{discrete_points}

Provided that the objects are located in a continuous space, it is called for a method to generate discrete observation points that enables the UAV to practically plan its flying path in the finite space. It is inevitable that discretization brings a $1+\epsilon$ approximation error which is drawn from the method in~\cite{approximation_algorithm}. Specifically, this process will discretize the whole area into many mesh-points and the mesh granularity $\delta$ is calculated given Eq.~(\ref{mesh}) where $D = \mathop{\max}_{o_i,o_j \in O} d_{o_i, o_j}$ denotes the maximum distance between two objects.

\begin{equation}
    \delta = \frac{\epsilon * D}{4n} \label{mesh}
\end{equation}

\begin{lemma}\label{lemma:mesh}
The gross error of the discretization process does not exceed $\epsilon*D$.
\end{lemma}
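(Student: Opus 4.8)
The plan is to use a standard grid-rounding (snapping) argument: start from an optimal continuous solution, replace every observation point it uses by the nearest mesh-point produced by the discretization, and then control how much the total path length can grow. Since $D$ is a natural lower bound on the optimum, bounding the absolute error by $\epsilon D$ is exactly what is needed to later convert the discretization loss into the multiplicative $(1+\epsilon)$ factor.

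First I would bound the number of observation points that any reasonable solution --- in particular the optimal continuous one --- needs to visit. Each object is a rectangle with four sides and there are $n$ objects, so the set $Q$ of side-midpoints has size $4n$, and a feasible path needs at most one observation stop per side. Hence the number of points at which the UAV pauses to observe is at most $4n$ (if a single stop efficiently observes several sides in the sense of Definition~\ref{def:efficient_observe}, the count only drops, so $4n$ is a safe upper bound). Next I would bound the error introduced at a single point: the discretization lays down a mesh of granularity $\delta$, so every point of the plane lies within distance at most $\delta$ of some mesh-point, and rounding an observation point to its nearest mesh-point therefore displaces it by at most $\delta$.

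I would then accumulate these per-point errors along the path. Applying the triangle inequality segment-by-segment, moving the two endpoints of each edge by at most $\delta$ changes that edge's length by a bounded amount, and summing the contributions over all at most $4n$ observation points shows that the rounded path exceeds the original by at most the total of the per-point displacements. Substituting the chosen granularity from Eq.~(\ref{mesh}), namely $\delta = \epsilon D/(4n)$, the gross error is at most $4n \cdot \delta = 4n \cdot \epsilon D/(4n) = \epsilon D$, which is precisely the claimed bound.

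The main obstacle I anticipate is the bookkeeping in the accumulation step: making precise what ``gross error'' means and charging each displacement to the path length without picking up a spurious constant, since every interior point touches two edges and a naive triangle-inequality sum risks a factor of two. I would handle this by defining the gross error as the sum of the per-point displacements and tying the constant in $\delta$ to that convention, and by arguing that the rounded points can still be taken to satisfy the efficient-observation constraints within the $(1+\epsilon)$ tolerance, so that feasibility of the discretized solution is preserved throughout.
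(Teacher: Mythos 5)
Your proposal follows essentially the same argument as the paper's own proof: bound the per-point rounding error by $\delta$, note that at most $4n$ observation points are used, invoke $D$ as a lower bound on the optimum, and substitute $4n\delta = \epsilon D$ from Eq.~(\ref{mesh}). If anything, you are more careful than the paper, which silently passes over both subtleties you flag --- the potential factor of two from each rounded point perturbing two incident edges, and the need for rounded points to remain feasible under Definition~\ref{def:efficient_observe}.
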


\begin{proof}
Since mesh-points are used to approximate the points in the 2D continuous space, the approximation error for each point is at most $\delta$ when rounding it to the nearest mesh-point. The total error is less than $4n*\delta$ in that no more than $4n$ observation points are selected in the algorithms. The gross error does not exceed $\epsilon$ times of the lower bound of the optimal solution, so that the true relative error is at most $\epsilon$. $D$ is a satisfying lower bound of the problem since the UAV has to travel at least this distance to efficiently observe two objects and that we do not consider the situation with one object. Therefore, Lemma~\ref{lemma:mesh} is proved. By setting $4n*\delta = \epsilon*D$, we obtain Eq.~(\ref{mesh}).
\end{proof}

The obtained mesh granularity is then used to generate all the feasible observation points in the space. The process is described in Algorithm~\ref{algo:generate_observation_points}. For each object $o_i$, padding with width $d_{max}$ is applied to all four sides, followed by identifying all mesh points in the padded area. However, not all the mesh points are observation points because some of them are not efficient for the UAV, as defined in Definition~\ref{def:efficient_observe}. These infeasible points are then removed, leaving only the effective observation points. The representation of effective observation points $P$ are denoted by $ \{p_1,\, p_2,\, p_3,\, ... ,\,p_m\} $ and the coordinate of the $i^{th}$ point is $(x_{p_i},\,y_{p_i})$. The mesh result is illustrated in Figure~\ref{img:mesh_result}, in which the blue lines confine the possible area of the observation points and the effective ones are depicted in green, whereas the black points do not satisfy Definition~\ref{def:efficient_observe}.

\begin{algorithm}[h]
  \caption{Generation of observation points} % 名称
    \label{algo:generate_observation_points}
    \begin{algorithmic}[1]
      \Require
        Object set $O$,  observation range $d_{max}$, mesh granularity $\delta$
      \Ensure
        $P$: Observation points 
        \State $P \gets \emptyset$
        \For{object $o_i$ in $O$}
            \State Pad area with width $d_{max}$ on each of the four sides of $o_i$.
            \State Grid the padding area with granularity $\delta$ and gather the grid points as $P'$.
            \State Eliminate the points in $P'$ which do not satisfy Definition~\ref{def:efficient_observe}.
            \State $P \gets P \cup P'$
        \EndFor
        \State return $P$
    \end{algorithmic}
\end{algorithm}

\begin{figure}[htbp]
\centerline{\includegraphics[width=0.4\textwidth]{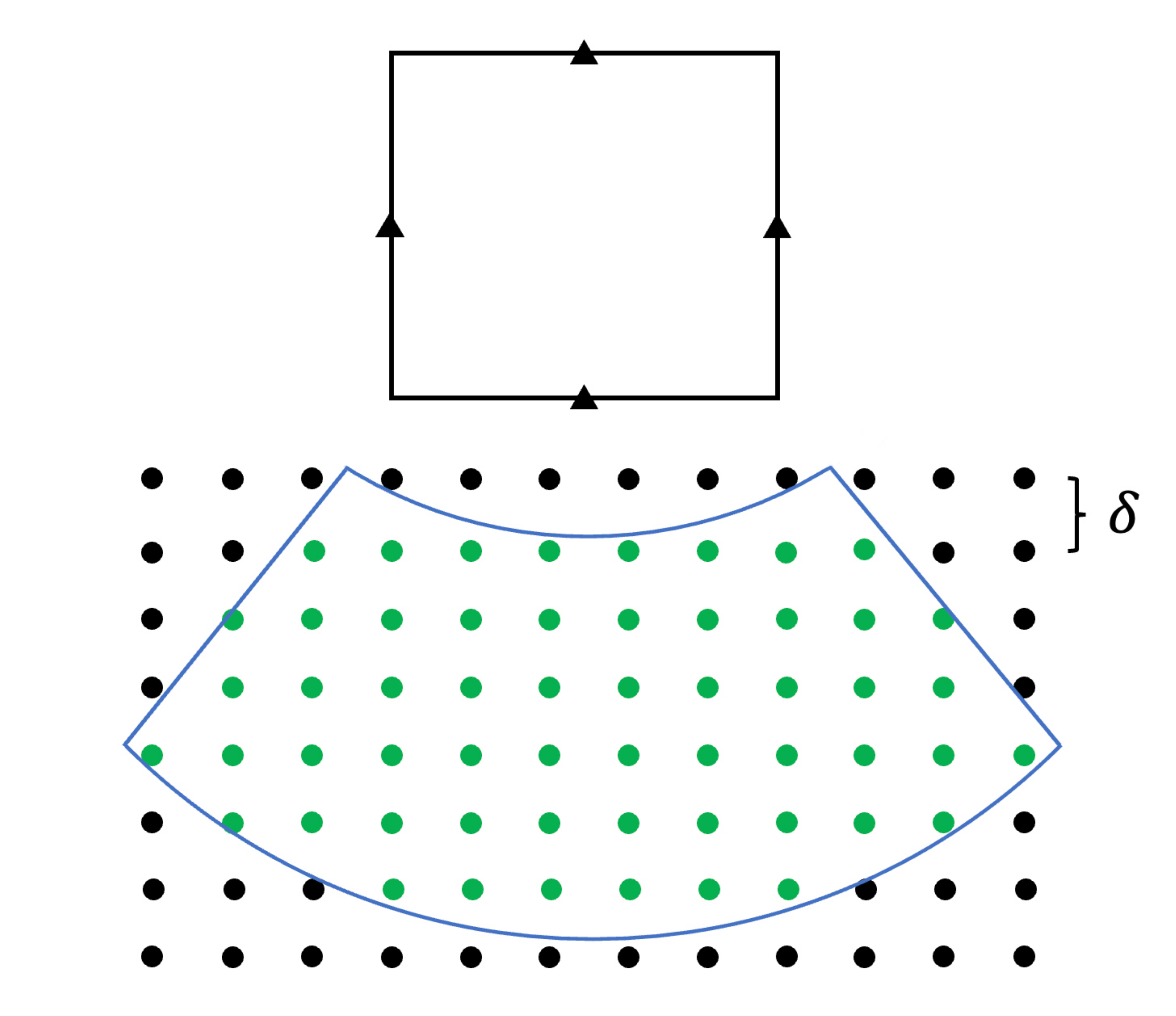}}
\caption{Area discretization of the observation range}
\label{img:mesh_result}
\vspace{0.3\in}
\end{figure}

\subsection{Integer Linear Programming representation}
\label{ILP_section}
In this section, we present the problem in an integer linear programming (ILP) form.
Given the discrete observation points in Section~\ref{discrete_points}, the ILP is devised based on the idea that the map is divided into several independent zones, each of which contains one side of an object and observation points that can cover it. As we have a starting point, it is regarded as an independent zone that contains only one point itself. The next step is to select one point in each zone and form a path passing all these observation points and a starting point. The designed ILP is based on the TSP formulation, and is shown in Eq.~(\ref{eq:0}) to Eq.~(\ref{eq:6}).

\begin{figure*}
\begin{align}
    \min \quad &\sum d_{i,j,p_1,p_2}\cdot X_{i,j,p_1,p_2}  \label{eq:0} \\ 
    s.t. &\sum_{j}\sum_{p_1}\sum_{p_2} X_{i,j,p_1,p_2} = 1, \forall i \in [1, 4n+1] \label{eq:1}\\
    &\sum_{i}\sum_{p_1}\sum_{p_2} X_{i,j,p_1,p_2} = 1, \forall j \in [1, 4n+1] \label{eq:2}\\
    &\sum_{j}\sum_{p_2} X_{i,j,p_1,p_2} = \sum_{j}\sum_{p_2} X_{j,i,p_2,p_1} \forall i \in [1, 4n+1], p_1\in zone~1 \label{eq:3}\\
    &\sum_{j \notin \{1, N+1\}}\sum_{p_2} (X_{1,j,p_1,p_2}+X_{N+1,j,p_1,p_2}) \notag \\
    & \qquad = \sum_{j \notin \{1, N+1\}}\sum_{p_2} (X_{j,1,p_2,p_1}+X_{j, N+1,p_2,p_1} ), \forall p_1\in zone~1  \label{eq:4}\\
&u_{i}-u_{j} + NX_{ijp_1p_2} \leq N-1,  \forall i, j \in V, i\neq j \neq 0 \label{eq:5}\\
 &X_{ijp_1p_2} \in 	\left\{ 0,1 \right\}, u_{i} \geq 0, u_{i} \in R \label{eq:6}
\end{align}
% \caption{ILP formulation}
\label{ILP_formula}
\end{figure*}

Let $X_{i,j,p_1,p_2}$ equal 1 if there is an edge between $p_1$ in zone $i$ and $p_2$ in zone $j$, hence the TSP path is composed of all the edges which are selected given $X_{i,j,p_1,p_2} = 1$. $d_{i,j,p_1,p_2}$ denotes the distance between point $p_1$ in zone $i$ and $p_2$ in zone $j$. Therefore, the objective function~(\ref{eq:0}) is to minimize the total cost of chosen edges with the following constraints. Eq.~(\ref{eq:1}) ensure that there is only one entered edge coming from other zones connecting only one point in zone $i$, and similarly, Eq.~(\ref{eq:2}) ensure one out edge leaves zone $i$ and goes to some other zone. To further restrict there is only one point chosen in each zone, the point connected with in-edge is supposed to be identical to that connected with out-edge, which is formulated in Eq.~(\ref{eq:3}) where the in-degree is always equal to the out-degree. Consequently, the point is either not chosen or has both in and out edge. As a supplement, Eq.~(\ref{eq:4}) aim to guarantee the correctness of the degree of the first and last point in the path. Besides, an effective TSP path requires the absence of sub-loop, which is eliminated by modifying the Miller-Tucker-Zemlin (MTZ) constraint~\cite{MTZ-tsp}, as presented in Eq.~(\ref{eq:5}).

\section{Methodology}
\label{methodology}
This section includes the complete methods solving both the offline and the online problems.

\subsection{An approximation algorithm} \label{offline_algrithm}

The idea of the proposed algorithm is as follows. For the offline solution, the algorithm works in two phases, the first of which seeks to identify the observation points that are to be visited and the second constructs a tour that passes all the selected observation points.  

Given a set of observation points, the offline algorithm begins by constructing a global graph, in which the information and relationships between observation points and objects are included. We then screen out the observation points as the way-points connecting all the objects' sides with relatively low cost. The proposed method achieves an approximation ratio of $(1+\epsilon)(2+2n)$, i.e., the obtained solution has a cost at most $(1+\epsilon)(2+2n)$ times of the cost of the optimal solution, as presented in Theorem~\ref{theorem1}.

\subsubsection{Construction of graph}
Recall that all the sides of the objects need to be observed and we use the midpoint of a side to represent it. The vertices of the graph is composed of the observation points $P$ and the midpoints of sides $Q$, with edges embodying the relations among the vertices. As shown in Algorithm \ref{graph}, if the UAV can efficiently observe a side $q_i$ of an object at location $p_j$, an edge with weight $D/2$ is connected between these two vertices. The motivation of setting the weight to be $D/2$ is to avoid the appearance of a path that links two observation point $p_m$, $p_n$ while passing one side, since the UAV cannot reach such a position. The second part is to add edges between each pair of observation points $p_i$ to $p_j$ with cost $d_{p_i,p_j}$ and edges between each observation points and the starting point with their distance. As a result, the expected graph $G$ is obtained.

\begin{algorithm}[htbp] 
  \caption{Graph construction} % 名称
  \label{graph}
  \begin{algorithmic}[1]
    \Require
       Observation points $P$, midpoints of each side of objects $Q$ and the starting point $p_{start}$.
    \Ensure
    Graph $G$
    \State Set as vertices $P$, $Q$ and $p_{start}$ 
    \For{side $q_i$ in $Q$}
        \For{observation point $p_j$ in $P$}
            \If{$p_j$ can efficiently observe side $q_i$}
                \State add an edge with weight $ D/2 $
            \EndIf
        \EndFor
    \EndFor

    \For{each pair of points $p_i$ and $p_j$}
        \State add edge with weight $d_{p_i,p_j}$
    \EndFor
    \For{each point $p_i$ in $P$}
        \State add edge between $p_i$ and $p_{start}$ with weight $d_{p_i,p_{start}}$
    \EndFor
    
    \State return  $G$.
  \end{algorithmic}
\end{algorithm}

\subsubsection{Selection of the observation points}
The next step is to connect the vertices of the graph via a structure that both ensures connectivity and the lowest cost. Granted that directly finding the minimum spanning tree on graph $G$ will connect all the objects and observation points, it does not take into account the internal relationship that the UAV may be able to observe several sides of objects at one location, hence there may be unnecessary to visit all of the observation points. In contrast, it is desired to search for a tree connecting all the sides of objects and some observation points that exactly cover them, leading to the essential idea of our algorithm: reducing to the Steiner Tree problem.

\begin{definition} [Steiner Tree] \label{def:Steiner-Tree} 
Given an undirected graph ${G} = (V, E)$ with non-negative weights, a Steiner tree is a tree that spans the Steiner points $\mathcal{S}$, which is given as input where $\mathcal{S} \subseteq V$. 
\end{definition} 

The minimum Steiner Tree is a Steiner Tree with the minimum total of edges' weights, which is known to be a NP-complete problem~\cite{Steiner-tree-np-hard}. Accordingly, a way to solve for a Steiner Tree is through approximation algorithms, one of which is described below and achieves an approximation ratio of 2. In our situation, we seek for a Steiner Tree of the global graph $G$ with this algorithm by setting all the starting point and the sides of objects $Q$ as Steiner points $S$.

For completeness, we describe the algorithm~\cite{Steiner-tree} for finding a Steiner Tree. It first computes the shortest paths between each pair of Steiner points, the aggregation of which yields a graph $G_1$. Consequently, $G_1$ is a sub-graph of $G$, containing only all the Steiner points $Q$ by reducing some unrelated edges. The next step is solving the minimum spanning tree $T_s$ of $G_1$, which further narrows the graph's size. At this moment, $T_s$ is actually no larger than the minimum spanning tree of the initial graph $G$ but probably still includes some unnecessary edges and vertices. Recall that the desired Steiner tree only needs to span Steiner points, hence the last step is to delete redundant edges and vertices, ensuring all the leaves of the tree are Steiner points.

A possible Steiner Tree is presented in green lines (both solid and dashed) in Figure~\ref{img:possible_steiner_tree} that contains three objects, each of which is surrounded by according observation points in the same color. The starting point and Steiner points are depicted by the large orange triangle and the small triangles on each sides. The green dashed lines show the relationship between the Steiner points and the selected observation points.

\begin{figure}[htbp]
\centerline{\includegraphics[width=0.8\textwidth]{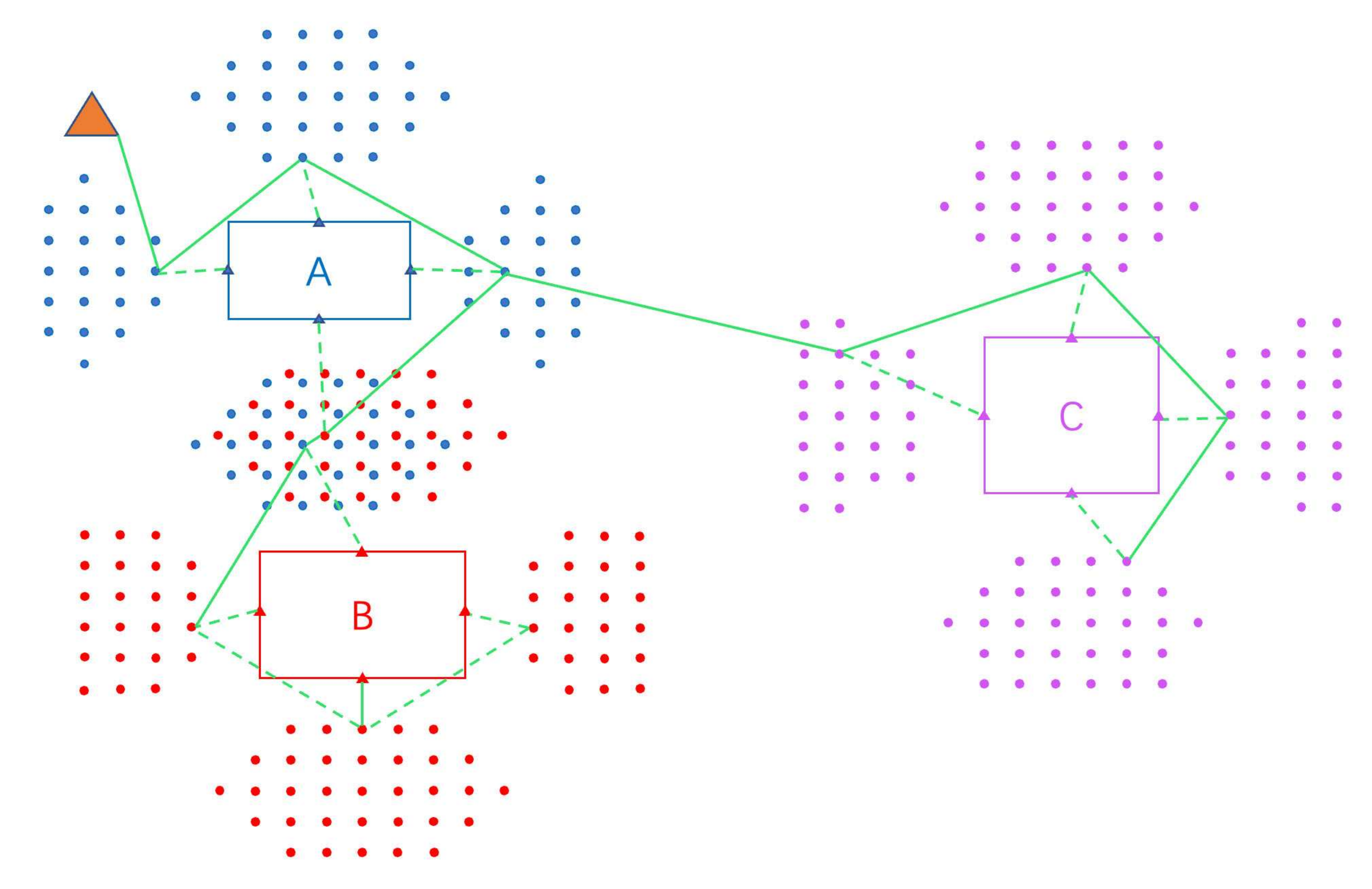}}
\caption{A possible Steiner tree connecting the selected observation points that observes all faces of all objects}
\label{img:possible_steiner_tree}
\end{figure}

\subsubsection{Construction of the path} \label{sec:form path part}
After obtaining the Steiner tree, we still need to remove the edges between the Steiner points and the selected observation points since the UAV does not need to go through these edges to observe the objects. The tree after trimming is the one that can be correlated with the UAV's path. Up to now, the whole algorithm has almost been completed, leaving only the formation of the path for UAV. Therefore, based on the trimmed Steiner tree, a set of observation points is collected, including the starting point. The acquired Steiner tree chooses key observation points with low cost on edge while satisfying the coverage constraints. Denote these observation points as $R$ where $R \subseteq P $ and all of them need to be visited by the UAV. We carried out the algorithm for TSP to find the shortest path on these way-points. Note that TSP is proven to be an NP-hard problem, and we use the 1.5-approximation algorithm that refers to~\cite{1.5TSP}.

\begin{algorithm}[h]
  \caption{Approximation algorithm} % 名称
    \label{algo:our_appro_algo}
    \begin{algorithmic}[1]
      \Require
        Object set $O$
      \Ensure
        $L$: the length of TSP path
      \State Obtain observation points with Algorithm~\ref{algo:generate_observation_points}.
      \State Construct a graph with Algorithm~\ref{graph}.
      \State Construct a Steiner Tree based on~\cite{Steiner-tree}.
      \State Calculate the tour cost based on~\cite{1.5TSP}.
    \end{algorithmic}
\end{algorithm}

\begin{lemma}
    \label{blue>D}
    The cost of the Steiner tree after trimming $T_{left}$ connecting the observation points is at least $D$.
\end{lemma}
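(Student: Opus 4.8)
The plan is to establish a lower bound on the cost of the trimmed Steiner tree $T_{left}$ by exploiting the fact that the tree must connect observation points capable of covering \emph{all} sides of all objects. Recall from Lemma~\ref{lemma:mesh} that $D = \max_{o_i, o_j \in O} d_{o_i, o_j}$ is the maximum distance between two objects, and that the problem assumes at least two objects are present. Since every side of every object must be efficiently observed, the set of selected observation points $R$ must include points that can cover sides belonging to (at least) the two objects $o_i, o_j$ realizing the maximum $D$. The core geometric observation is that an observation point covering a side of $o_i$ must lie within distance $d_{max}$ of that side (by Definition~\ref{def:efficient_observe}), and similarly for $o_j$.

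First I would argue that $T_{left}$ must contain at least two distinct observation points, one associated with covering a side of $o_i$ and one with covering a side of $o_j$; these cannot coincide because a single point cannot simultaneously be close enough to efficiently observe sides of two objects that are maximally far apart (this is where the separation $D$ enters). Next I would invoke the triangle inequality: the spatial separation between an observation point near $o_i$ and an observation point near $o_j$ is at least the inter-object distance $d_{o_i,o_j} = D$ minus the padding contributions near each object. Since $T_{left}$ is a connected tree spanning these observation points, the total edge weight of any path between them in the tree is at least their Euclidean distance, and hence the tree cost is bounded below by this quantity.

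The hard part will be making the distance bookkeeping rigorous: the observation points are not located \emph{at} the objects but within the $d_{max}$-padded region around their sides, so the naive claim that two such points are separated by $D$ requires justification. I expect the intended argument to lean on the same reasoning used to justify $D$ as a lower bound in Lemma~\ref{lemma:mesh} — namely that to efficiently observe two objects the UAV must traverse at least distance $D$ — and to transfer this directly to the tree cost, since the trimmed tree encodes exactly the movement of the UAV among chosen observation points. I would therefore structure the proof as: (i) identify two observation points in $R$ serving the two farthest objects, (ii) show the UAV's required travel between observation regions covering these two objects is at least $D$ using the lower-bound argument already established, and (iii) conclude that the connected tree $T_{left}$ linking all of $R$ has cost at least $D$. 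The main obstacle is controlling the $d_{max}$ slack so that the bound lands exactly at $D$ rather than $D - 2d_{max}$ or a similar degraded quantity; the cleanest route is to appeal to the already-accepted fact from Lemma~\ref{lemma:mesh} that $D$ is a valid lower bound on the optimal tour, and to observe that $T_{left}$'s cost dominates the relevant portion of that tour.
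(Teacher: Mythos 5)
Your proposal takes essentially the same route as the paper's proof, which is only two sentences long: the Steiner tree connects all objects through chosen observation points, hence it contains a path linking the two farthest objects, and that path is no shorter than the straight line between them, giving cost at least $D$. The one place you go beyond the paper is the $d_{max}$ bookkeeping: the paper silently identifies ``a path connecting the two farthest objects'' with a path between observation points \emph{near} those objects, which is precisely the $D$ versus $D-2d_{max}$ slack you flagged. That slack is a genuine imprecision, but note that your proposed fix --- appealing to Lemma~\ref{lemma:mesh}'s claim that the UAV must travel at least $D$ to observe two objects --- does not close it, because that claim suffers from exactly the same slack (the UAV only needs to travel between the $d_{max}$-neighborhoods of the two objects); it merely inherits the paper's level of rigor rather than improving on it. In short, your argument matches the published one, stated with more honesty about where it is loose.
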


\begin{proof}
    %The proof of Lemma \ref{blue>D} is simple. 
    Considering that the Steiner tree connects all the objects by choosing only one observation point from each of them, there must exist a path in the tree that connects two farthest objects. Since the path is not shorter than straight line between these two points, thus the cost of the Steiner tree is at least greater than $D$.
\end{proof}

We use $T_{trimmed}$ to denote the part of the edges that were removed and $LB$ the lower bound cost.

\begin{lemma}\label{lemma2}
    $LB(T_{trimmed}+T_{left})-2nD \leq LB(T_{left})$
\end{lemma}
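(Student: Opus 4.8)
The plan is to reduce the claimed inequality to a single bound on the cost of the trimmed edges. Reading $LB(\cdot)$ as the total edge weight of its argument, this quantity is additive over disjoint edge sets, and since $T_{trimmed}$ and $T_{left}$ partition the edge set of the (untrimmed) Steiner tree, we have $LB(T_{trimmed}+T_{left}) = LB(T_{trimmed}) + LB(T_{left})$. Substituting this into the statement cancels the term $LB(T_{left})$ on both sides, so the lemma is equivalent to the single inequality $LB(T_{trimmed}) \le 2nD$. The whole argument therefore comes down to bounding the total weight of the removed edges.

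Next I would pin down exactly which edges constitute $T_{trimmed}$. By the construction in Algorithm~\ref{graph}, every edge incident to a side-midpoint $q_i \in Q$ is assigned weight $D/2$, whereas all remaining edges (between two observation points, or between an observation point and $p_{start}$) carry an actual distance. The trimming step removes precisely the edges joining a Steiner midpoint to a selected observation point, i.e.\ exactly the weight-$D/2$ edges. Hence $LB(T_{trimmed}) = \tfrac{D}{2}\cdot k$, where $k$ is the number of weight-$D/2$ edges present in the Steiner tree, and it remains to show $k \le 4n$.

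The key step is to argue that each of the $4n$ side-midpoints is a leaf of the Steiner tree, so that it contributes exactly one weight-$D/2$ edge. Each $q_i$ must occur in the tree because it is a required Steiner point, so it has degree at least one; the crux is ruling out degree two or more. Here I would invoke the rationale behind the weight $D/2$: a path entering and leaving a midpoint $q_i$ through two observation points costs $D/2+D/2 = D$ and corresponds to a route the UAV cannot physically realize, so the minimum Steiner tree never uses a midpoint as an internal vertex (any such detour can be short-cut through the observation points directly). Combined with the final pruning step of the Steiner-tree algorithm, which guarantees that every leaf is a Steiner point, this forces each $q_i$ to be a degree-one leaf. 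Consequently $k = 4n$, giving $LB(T_{trimmed}) = 4n\cdot \tfrac{D}{2} = 2nD$, and the reduced inequality follows immediately.

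I expect the degree/leaf argument to be the main obstacle, as it is the only place requiring reasoning about the tree's structure rather than mere summation of weights. Note that it is genuinely necessary: if even one midpoint had degree $\ge 2$, then $k$ would exceed $4n$ and the bound $2nD$ could fail. Making this rigorous means confirming that in the metric-closure/MST construction no shortest path between terminals is ever routed through a midpoint as an intermediate vertex, which is exactly what the $D/2$ weighting together with the leaf-pruning step is designed to ensure.
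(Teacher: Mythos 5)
Your proposal is correct and follows essentially the same route as the paper's own proof: both decompose the tree weight additively into $T_{left}$ plus the removed edges, compute the weight of $T_{trimmed}$ as $4n\cdot D/2 = 2nD$, and rearrange (the paper's ``unfolding the inequality''). The only difference is one of rigor: the paper asserts in a single sentence that the weight of $T_{trimmed}$ equals $2nD$, whereas you make explicit the degree-one (leaf) property of the side-midpoints on which this count silently relies.
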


\begin{proof}
As described above, the weight of the edge between an observation point and a side of an object is $D/2$. Therefore, the weight of $T_{trimmed}$ equals to $2nD$. By unfolding the inequality, the lemma follows.
\end{proof}

\begin{lemma}\label{lemma3}
    $LB(T_{trimmed}+T_{left})\leq  T_{left}+2nD \leq 2*LB(T_{trimmed}+T_{left})$
\end{lemma}

\begin{proof}
By lemma~\ref{lemma2}, we have $LB(T_{trimmed}+T_{left})\leq  T_{left}+2nD$. Since the algorithm for solving the minimum Steiner Tree problem is 2-approximate~\cite{Steiner-tree} and the cost of the Steiner is $T_{left}+2nD$, we have $T_{left}+2nD \leq 2*LB(T_{trimmed}+T_{left})$. Therefore, the lemma follows.
% This lemma demonstrates the results of the Steiner tree algorithm. The lower bound is exactly $LB(T_{trimmed}+T_{left})$ and the cost of the Steiner tree is actually equal to $T_{left}+2nD$. Since Steiner tree algorithm is 2-approximate, the cost of the Steiner tree is less than $2*LB(T_{trimmed}+T_{left})$. Therefore, Lemma \ref{lemma3} is proved.
\end{proof}

\begin{theorem}
\label{theorem1}
    Algorithm \ref{algo:our_appro_algo} has an approximation ratio of $(1+\epsilon)(2+2n)$.
\end{theorem}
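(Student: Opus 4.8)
The plan is to separate the two sources of error advertised in the ratio: the multiplicative $(1+\epsilon)$ coming purely from discretization, and the combinatorial factor $(2+2n)$ coming from the Steiner-tree plus TSP pipeline. First I would invoke Lemma~\ref{lemma:mesh} to argue that working on the discretized instance inflates every length by at most a factor $(1+\epsilon)$, so it suffices to prove an approximation ratio of $(2+2n)$ against the optimum of the discretized problem and then multiply through by $(1+\epsilon)$ at the very end. The second ingredient I would fix immediately is the lower bound $\mathrm{OPT}\ge D$: exactly as in the proof of Lemma~\ref{blue>D}, any feasible tour must physically carry the UAV between the two farthest objects, so its length is at least the straight-line distance $D$. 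This inequality is what will later let me absorb the additive $nD$ terms into a multiplicative factor.

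Next I would relate the \emph{optimal} Steiner tree to $\mathrm{OPT}$. Starting from an optimal UAV tour, delete one edge to obtain a spanning tree over the visited observation points of cost at most $\mathrm{OPT}$, then attach each of the $4n$ side-midpoints of $Q$ to a covering observation point via its weight-$D/2$ edge. Since every side must be connected and the only edges incident to a side-midpoint have weight $D/2$, these attachments contribute exactly $2nD$ and are \emph{mandatory} for any Steiner tree spanning $Q$. This gives $LB(T_{trimmed}+T_{left})\le \mathrm{OPT}+2nD$, and, stripping the forced $2nD$ of side edges (Lemma~\ref{lemma2}), the observation-point-connecting part of the Steiner optimum is at most $\mathrm{OPT}$. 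Establishing that the $D/2$ weighting forces every feasible Steiner tree to contain the full $2nD$ of side-edge mass is the key structural fact that makes the comparison apples-to-apples.

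I would then chain the algorithm's guarantees. By Lemma~\ref{lemma3} the tree produced by the $2$-approximate Steiner algorithm satisfies $T_{left}+2nD\le 2\,LB(T_{trimmed}+T_{left})$, so combining with the previous paragraph yields a bound on $T_{left}$ in terms of $\mathrm{OPT}$ and $D$. Finally I would bound the length $L$ of the tour returned in the last step by a constant times $T_{left}$ using the standard guarantee for constructing a tour over the vertices spanned by a tree, substitute the two lower bounds $\mathrm{OPT}\ge D$ and $\mathrm{OPT}\ge M^{*}$ (the observation-connecting optimum), and convert each surviving $nD$ term into $n\cdot\mathrm{OPT}$. Reintroducing the $(1+\epsilon)$ discretization factor then gives $L\le(1+\epsilon)(2+2n)\,\mathrm{OPT}$.

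The step I expect to be the main obstacle is the bookkeeping of constants at the end. Two distinct $2$-approximations are in play — one for the Steiner tree (Lemma~\ref{lemma3}) and one for the tour over $R$ — together with the additive side-edge mass $2nD$, and the danger is that these factors \emph{multiply} to give $(4+4n)$ rather than collapsing to $(2+2n)$. Avoiding this requires being careful about which object is doubled in the final TSP bound (the trimmed tree $T_{left}$ versus the full Steiner tree $T_{left}+2nD$) and about charging the $2nD$ virtual-edge cost against the $n\cdot\mathrm{OPT}$ slack granted by $\mathrm{OPT}\ge D$, so that the side-edge terms cancel against the forced side edges in the Steiner optimum instead of being paid for twice. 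I would therefore treat the exact interleaving of the trimming identity (Lemma~\ref{lemma2}), the $2$-approximation (Lemma~\ref{lemma3}), and the tour-from-tree bound as the technical heart of the argument, and verify the constant by writing the inequality chain explicitly rather than bounding each stage in isolation.
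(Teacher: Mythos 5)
Your plan retraces the paper's own argument: the same $(1+\epsilon)\cdot(2+2n)$ factorization, the same lower bound $LB(T_{left})\ge D$ (Lemma~\ref{blue>D}), and the same use of Lemmas~\ref{lemma2} and~\ref{lemma3}. In one respect you are \emph{more} rigorous than the paper: your construction of a feasible Steiner tree from the optimal tour (delete one tour edge, attach each of the $4n$ side-midpoints by its weight-$D/2$ edge) actually justifies the inequality $LB(T_{trimmed}+T_{left})\le OPT+2nD$ and hence $LB(T_{left})\le OPT$, whereas the paper simply asserts ``the lower bound of the problem is $LB(T_{left})$'' with no argument.

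However, the obstacle you flag in your last paragraph is a genuine gap, and your proposal does not close it --- and, notably, neither does the paper. Chaining your inequalities honestly gives $T_{left}\le 2\,OPT+2nD$, and any legitimate tour-from-tree bound (doubling and shortcutting gives $L\le 2\,T_{left}$; Christofides~\cite{1.5TSP} gives $L\le 1.5\,OPT_{TSP}(R)\le 3\,T_{left}$) then yields $L\le 4\,OPT+4nD\le(4+4n)\,OPT$, i.e., a ratio of $(4+4n)(1+\epsilon)$ --- exactly the inflation you feared. The cancellation you hope for (``charging the $2nD$ virtual-edge cost against the $n\cdot OPT$ slack'') does not exist: the doubling factor multiplies \emph{all} of $T_{left}$, including the part bounded by $2\,OPT$, so no bookkeeping of the $nD$ terms can absorb it. The paper's proof of Theorem~\ref{theorem1} obtains $(2+2n)$ only by skipping the tree-to-tour step entirely: it takes the algorithm's cost to be the Steiner-tree bound $2\,LB(T_{left})+2nD$ and divides by $LB(T_{left})\ge D$, with no factor accounting for converting the tree into the TSP tour that Algorithm~\ref{algo:our_appro_algo} actually outputs. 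So your proposal is incomplete at precisely the point where the paper's own proof is unsound; the constant $(2+2n)$ cannot be recovered from the stated lemmas once the final TSP step is honestly accounted for.
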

\begin{proof}
    The overall cost of the Steiner tree is not greater than $2*LB(T_{left})+2nD$ and the lower bound of the problem is $LB(T_{left})$. Therefore:

    \begin{align}
        \frac{ALG}{OPT} = (1+\epsilon)*\frac{2*LB(T_{left})+2nD}{LB(T_{left})}  &= (1+\epsilon)*\left(2+\frac{2nD}{LB(T_{left})}\right)\notag \\
        &\leq (1+\epsilon)*(2+\frac{2nD}{D})\notag \\
        &= (1+\epsilon)*(2+2n)\notag 
    \end{align}
\end{proof}

\subsection{Online solutions}
In the online setting, we suppose the UAV is initiated at some location with several objects around it. There are other unknown objects in the map as a result of the limitation of UAV's perception range. The problem is then transformed into covering all the objects with limited information while exploring the unknown area. The difference from the offline algorithm is that the UAV needs to continuously update its known information and adjust the flying path when detecting new objects. We designed three algorithms to address the problem. The first step is meshing the known map with Eq.(~\ref{mesh}). $D$ here is the maximum distance between any objects in the initially known map. This procedure also ensures an approximation ratio of $1+\epsilon$, since $D$ in the initially known map is not greater than that in the globally known map. Therefore, we get a smaller mesh size in the online mode and the approximation ratio is then guaranteed. Furthermore, it is assumed that there is no object locating very far from the others so that it cannot be detected by the UAV during its flight. 

\subsubsection{Nearest Object First (NOF)}
The first online algorithm takes the intuitive idea that the UAV always chooses the nearest object and observes its sides (Algorithm~\ref{algo:nearest}). As included in the outer loop, the UAV selects the nearest side of an object $q_i$ and visits it by flying to the closest observation point $p_j$. Before $p_j$, it continuously updates its known information by appending newly-detected object (including objects' sides) and observation points into set $Q_{known}$ and $P_{known}$. Then $q_i$ and other sides $q_r$ are deleted from $Q_{known}$ that can be covered by UAV at $p_j$. The process continues until $Q_{known}$ is empty, meaning that all the objects have been covered. During the process, the location of UAV is recorded ceaselessly, from which length of the entire path is computed.

\subsubsection{Cheapest Insertion (CI)}
The second online algorithm draws the idea from the cheapest insertion algorithm~\cite{cheapest-insertion} which computes a tour by iteratively choosing an observation point and inserting it into the path that minimizes the insertion cost. Algorithm~\ref{algo:cheapest} describes the procedure. Different from the first online algorithm, this method begins by finding a path from the initial known area with the front part of offline method. Solve the Steiner tree of the known map and conduct TSP algorithm on the tree. This closed path ensures to cover all the objects and their sides in the primary map. Then the UAV starts to travel along the path as that in Algorithm \ref{algo:nearest}. The method makes a distinction in the way of adjusting path by inserting the observation points into the current path at the lowest cost. Since the UAV may efficiently observe some sides at newly discovered location $p_i$, there is no need to consider these sides when inserting their corresponding observation points. Similarly, the objects' sides are deleted when reaching the next point $p_i$ which is then popped from $Path$. The algorithm ends when $Path$ is empty and the path length is computed, meaning that all the demanding observation points have been visited.

\begin{algorithm}[htbp]
  \caption{Nearest object first} % 名称
  \label{algo:nearest}
  \begin{algorithmic}[1]
    \Require
       currently known objects' sides $Q_{known}$,
       currently known observation points $P_{known}$
    \Ensure
       $L$: the length of TSP path
    \While{$Q_{known}$ not empty}
        \State select the nearest $q_i$ from $Q_{known}$ 
        \State select the nearest $p_j$ from $P_{known}$ that can observe $q_i$
        \While{not reach $p_j$}
            \State move for a step and record current position
            \State update known information
            \If {detect a new nearest $q_z$}
                \State $Q_{known} \gets Q_{known} \cup q_z$
            \EndIf
        \EndWhile
        \State delete $q_i$ from $Q_{known}$
        \If{UAV covers other $q_r$}
        \State delete other $q_r$
        \EndIf
    \EndWhile
    \State Back to the starting point
    \State \Return The path length ${L}$ given the record
%    \State return ${L}$
  \end{algorithmic}
\end{algorithm}

\begin{algorithm}[htbp]
  \caption{Cheapest insertion} % 名称
  \label{algo:cheapest}
  \begin{algorithmic}[1]
    \Require
       currently known objects' sides $Q_{known}$,
       currently known observation points $P_{known}$
    \Ensure
       $L$: the length of path
ind a tour $Path$ for the initial known map 
        \While{$length(Path)\geq 2$}
        \State $p_{next} \gets Path[1]$
        \While{not reach $p_{next}$}
        \State move for a step and record current position
        \State update known information
        \If{detect new objects and sides $Q_{new}$}
        \For{each $q_j$ in $Q_{new}$}
        \State insert observation point $p_a$ that observes $q_j$ into $Path$ with lowest cost
        \State delete $q_m$ from $Q_{new}$ that covered by $p_a$
        \EndFor
        \EndIf
        \EndWhile
        \State delete $q_j$ from $Q_{known}$ covered at $p_{next}$
        \State delete $Path[0]$
    \EndWhile
    \State \Return The path length ${L}$ given the record
    %\State return ${L}$
  \end{algorithmic}
\end{algorithm}

\subsubsection{Best Available TSP}
The last online method (Algorithm~\ref{algo:update_tsp}) is based on the 1.5-approximation TSP algorithm and adjusts the path by updating it when detecting new objects. The overall process bears very close resemblance to cheapest insertion, with the only difference in updating the path. Once UAV has discovered new objects and observation points in the new area, the Steiner tree method is used again to choose some observation points covering the new area, the aggregation of which with those in the current path is used to solve a new TSP path. We use array $Path$ to dynamically store the sequence of observation points that the UAV is to visit and $Path[i]$ is the $i^{th}$ point of the current $Path$. Considering the order of the new path may be different, there is need to change the new path by adjusting $Path[0]$ as the starting point. At this moment, the next point could be different so that the it should be updated. The algorithm ends when $Path$ is empty or the initial point is the only observation point left, the latter implying that the UAV can directly return back.

\begin{algorithm}[h]
  \caption{Best available TSP} % 名称
  \label{algo:update_tsp}
  \begin{algorithmic}[1]
    \Require
       currently known objects' sides $Q_{known}$,
       currently known observation points $P_{known}$
    \Ensure
       $L$: the length of path
       \State Compute a Steiner tree $T$ the initial known map 
       \State Compute a closed TSP path $Path$ with respect to $T$
        \While{$length(Path)\geq 2$}
        \State $p_{next} \gets Path[1]$
        \While{not reach $p_{next}$}
        \State move for a step and record current position
        \State update known information
        \If{detect new objects' sides $Q_{new}$ and new observation points $P_{new}$}
            \State $G_{local} \gets $ local graph  on $Q_{new}$ and $P_{new}$
            \State $T_{local} \gets $ Steiner tree  on $G_{local}$
            \State $P_{total} \gets $ aggregate observation points in $T_{local}$ and $Path$
            \State $Path \gets $ TSP on $P_{total}$ with start point of $Path$
            %\State  Path_{new}$
        \EndIf
        \State $p_{next} \gets Path[1]$
        \EndWhile
        \State delete $Path[0]$
       
    \EndWhile
    \State \Return The path length ${L}$ given the record
    %\State return ${L}$
  \end{algorithmic}
\end{algorithm}

\section{Experiments}
\label{experiments}
In this section, we evaluate the performance of the proposed algorithms via carrying out a series of simulated experiments.

\vspace{0.2\in}
\subsection{Experiments setting}

A number of objects are placed randomly on a map with side length 120 meters (m). We applied padding with width 10m to each side of the map in order to keep the coordinates positive of all the objects and observation points. The objects' sizes are chosen from 1m*2m, 2m*2m, and 1m*1m. To ensure both the quality and safety of observation, we set the maximum and minimum observing distances to be 4m and 1m, respectively, with the visual angle of the UAV being 120 degrees. In the online setting, the perception range of the UAV is set to 40m. To avoid the situation that some objects are omitted because they cannot be detected by the UAV, we assume that for each object $o_i$ there exists at least another object $o_j$ whose distance is less than the perception range.

In the experiments, we set the number of objects $n$ to $\left\{5, 10, 15, 20, 25\right\}$ and the $\epsilon$ to $\left\{0.1, 0.2, 0.3, 0.4, 0.5\right\}$. The time limit is set to 30000s and $\epsilon$ to 0.2 for Gurobi~\cite{Gurobi} (an ILP solver). After fixing the number of objects $n$, we run 20 test cases by randomly generating the corresponding objects and testing all algorithms for each epsilon. Since the running time is unacceptable when $n$ is 25 and $\epsilon$ is 0.1, we do not test this pair of parameters.

All the experiments are conducted on the Linux server, with Intel (R) Xeon (R) Gold 5215 CPU 2.50GHz and 188 GB RAM.

\subsection{Integer linear programming and lower bound}
We use Gurobi to solve the designed ILP presented in Section~\ref{ILP_section}, which is a large scale mathematical planning optimizer developed by Gurobi Company~\cite{Gurobi}. During the computation, it maintains a lower bound of the ILP as well as the best feasible solution. The latter is compared with the performance of the proposed algorithms in Section~\ref{sec:numerical_results}. Both values are then used for comparison in the simulations. 

In order to strengthen the approximation ratio, we use another theoretical lower bound, which is the Steiner tree cost of the graph after trimming its edges between the Steiner points and the selected observation points. It is worth noting that the Steiner tree does not have a precise solution so that the cost is computed through the 2-approximation algorithm from~\cite{Steiner-tree} and the lower bound is half of the result. Since Gurobi optimizer also produces a lower bound, these two are then compared and the higher one is selected to evaluate the approximation ratio.  

\subsection{Numerical results}
\label{sec:numerical_results}

In this part, the statistical results are presented in the following figures and tables. We first present the optimizing results of Gurobi in Section~\ref{sec:gurobi_appro_result}, followed by the approximation ratio of our algorithms in two parts. Specifically, Section~\ref{sec:path_cost1} analyzes the approximation ratio relative to the lower bound presented in the previous section, and Section~\ref{sec:path_cost2} focuses on the relative ratio compared to the best feasible results of Gurobi. In addition, the running time of our algorithms are presented with different values of epsilon in Section~\ref{sec:runningtime}. Finally, we show that the approximation ratio is relatively stable with different number of objects in Section~\ref{sec:stable appro ratio}.

\subsubsection{Optimization results of Gurobi} \label{sec:gurobi_appro_result}

\begin{figure}[ht]
\centering
\includegraphics[width=0.48\textwidth]{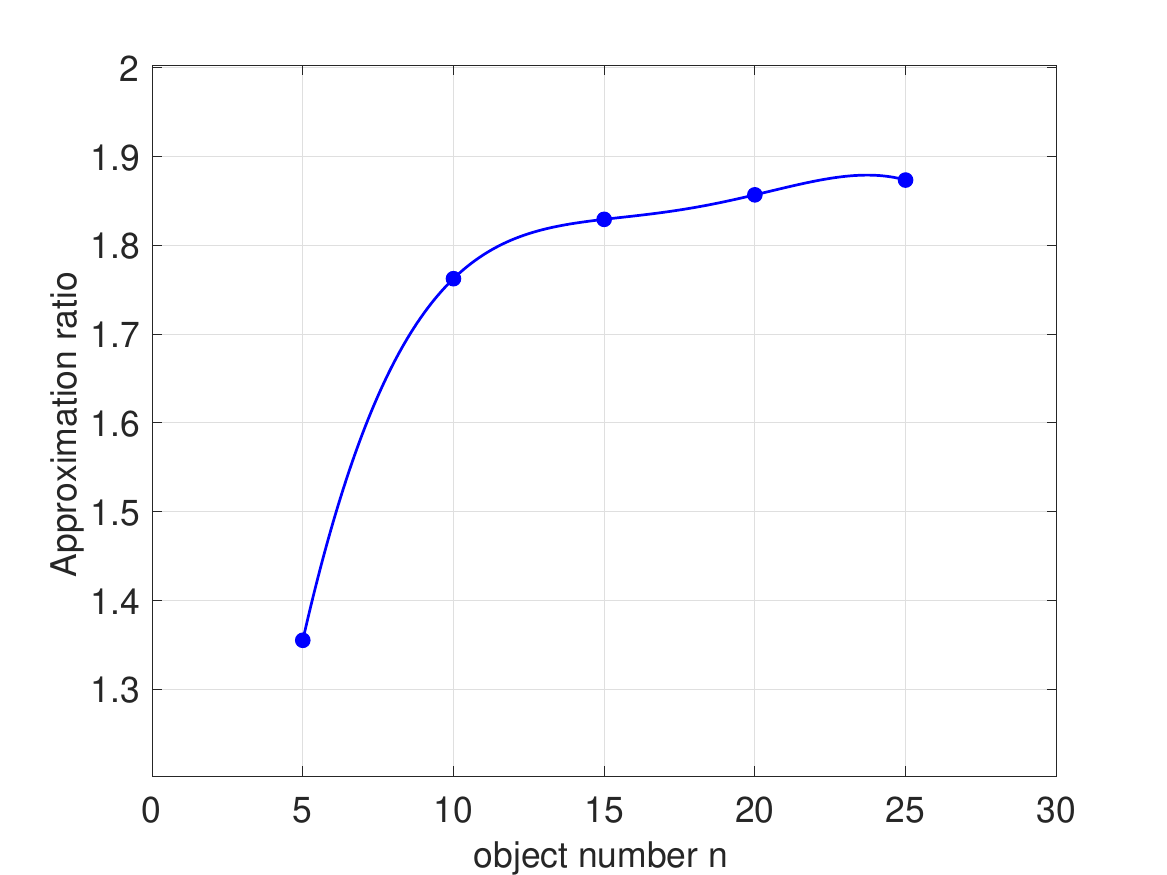}
\caption{Approximation ratio with Gurobi with $\epsilon=0.2$.}
\label{fig:gurobi_appro_result}
\vspace{0.5\in}
\end{figure}

 The approximation ratio of the solutions returned by Gurobi optimization is computed by the best feasible result of Gurobi over its lower bound. In other words, the smaller the gap is, the better the optimizing effect is. Figure~\ref{fig:gurobi_appro_result} shows that the approximation ratio is around 1.85 and is relatively stable with the increase of the object number from 10 to 25. There is an exception that the ratio is 1.35 when $n$ equals 5 which indicates that the Gurobi can optimize the objectives more effectively. For the rise in ratio, we found that the two optimization objectives (lower bound and best feasible solution) nearly remained constant in the second half of the optimization period when $n$ is greater than 10, owing to the fact that there are millions of parameters, which could be beyond Gurobi's computation capability. Considering the apparent gap between the best feasible solution and the lower bound of Gurobi, there is a need to evaluate our algorithms on both the two indices. 

\subsubsection{Approximation ratio compared to the lower bound} \label{sec:path_cost1}
The path costs of our algorithms are divided by the lower bound and the ratios are drawn in Figure~\ref{fig:tolowerbound}. The approximation ratio of NOF is around 2.2 when $n =$ 10, 15, 20 and 25. Specifically, NOF achieves the best performance when $n=15$, compared to the other three cases. The performance of CI algorithm decreases with the increase of $n$. Offline methods possess a relatively stable performance with the best shown in $n=5$.

It is noted that in all settings, Best Available TSP (BATSP) algorithm achieves the highest approximation ratio and is significantly higher than the other three algorithms. When there are 5 objects, the ratio fluctuates from 1.67 to 1.81 when $\epsilon$ ranges from 0.1 to 0.5. The ratio remains the similar trend when $n = 10$, beginning with a small increase to 2.89, then decreasing to the lowest value 2.71 and rebounding in the end. As for $n = 15$, the ratio rises from 2.74 to 3.08 before a small drop to 2.91 when $\epsilon$ is 0.5. In $n=20$ setting, the ratio continuously climbs from 2.78 to 3.20 and in $n=25$ setting, the ratio fluctuates again from 3.02 to 3.22.

\begin{figure}[h]
\centering
\begin{minipage}[b]{1\linewidth}
	\subfloat[object number = 5]{\label{fig:object5}
	\includegraphics[width=0.45\linewidth]{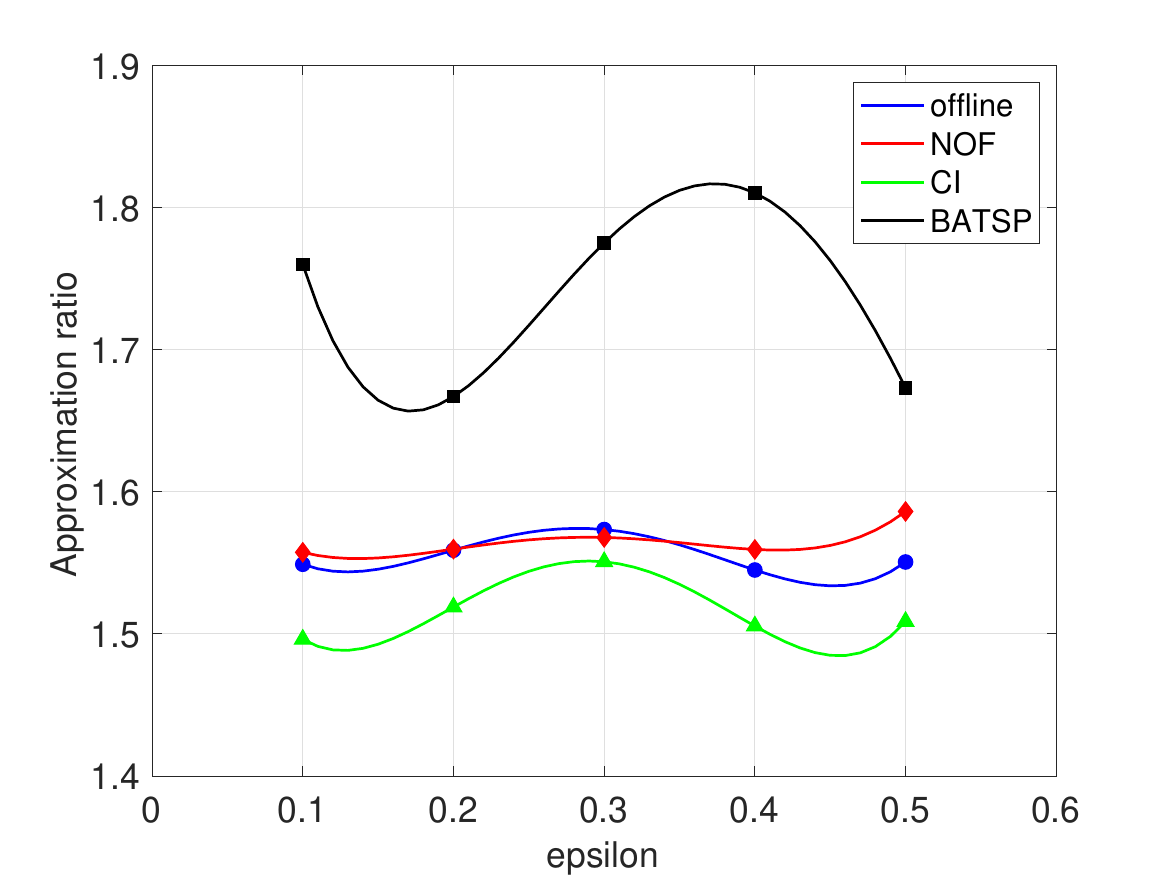}}
	\subfloat[object number = 10]{\label{fig:object10}
	\includegraphics[width=0.45\linewidth]{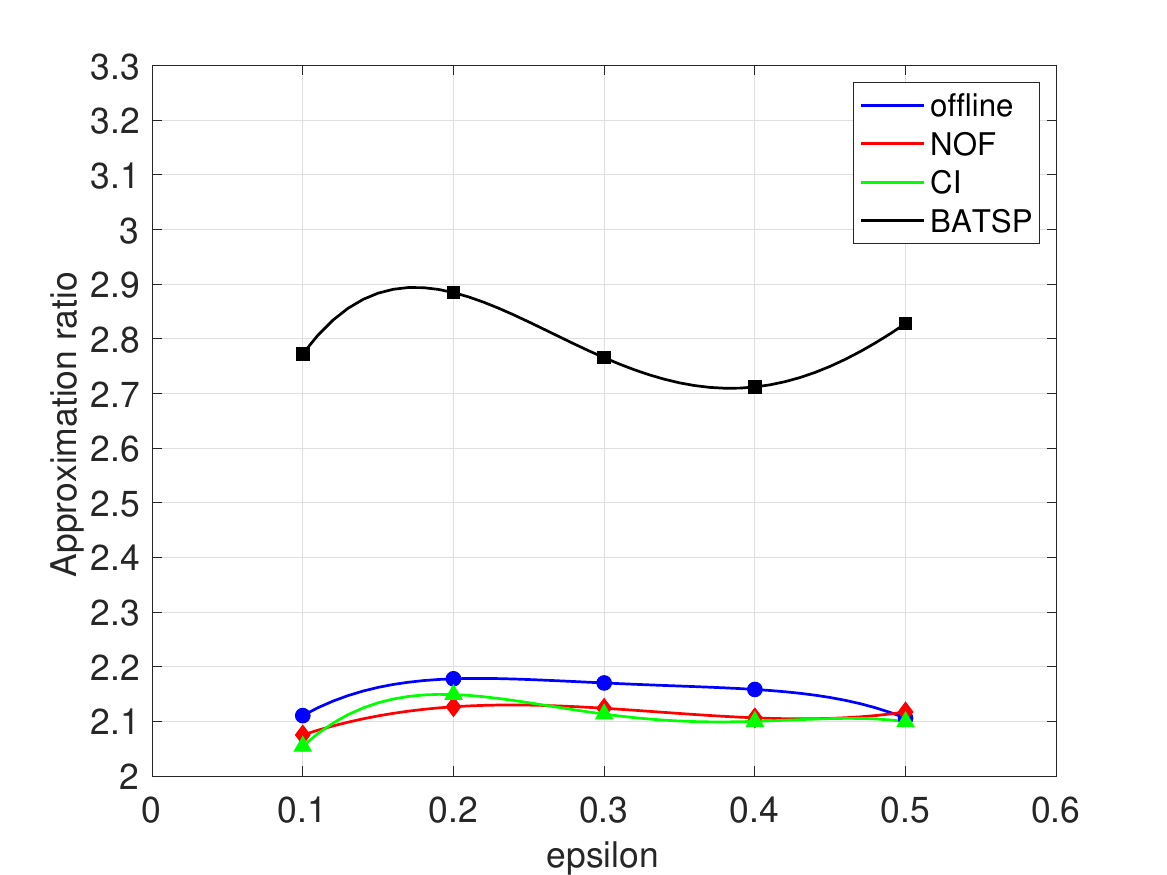}}
\end{minipage}
\begin{minipage}[b]{1\linewidth}
	\subfloat[object number = 15]{\label{fig:object15}
	\includegraphics[width=0.45\linewidth]{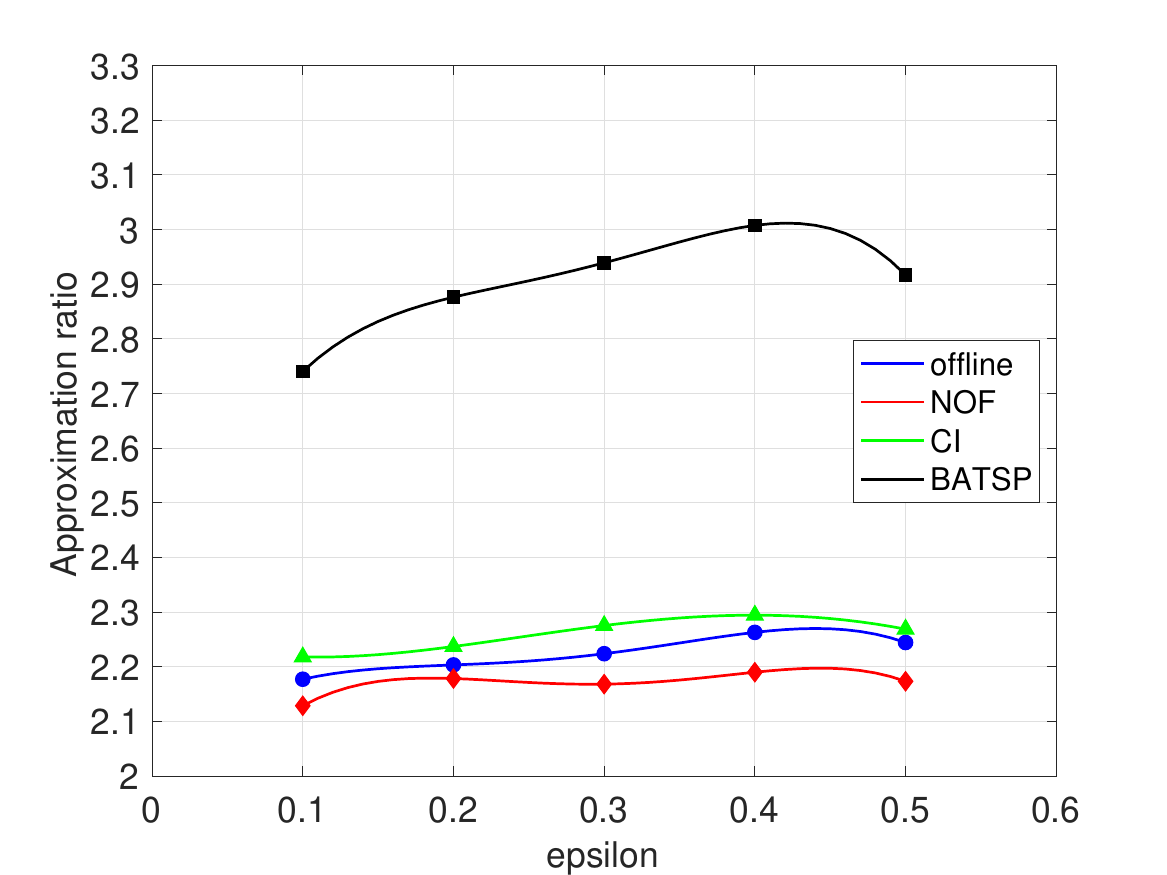}}
	\subfloat[object number = 20]{\label{fig:object20}
	\includegraphics[width=0.45\linewidth]{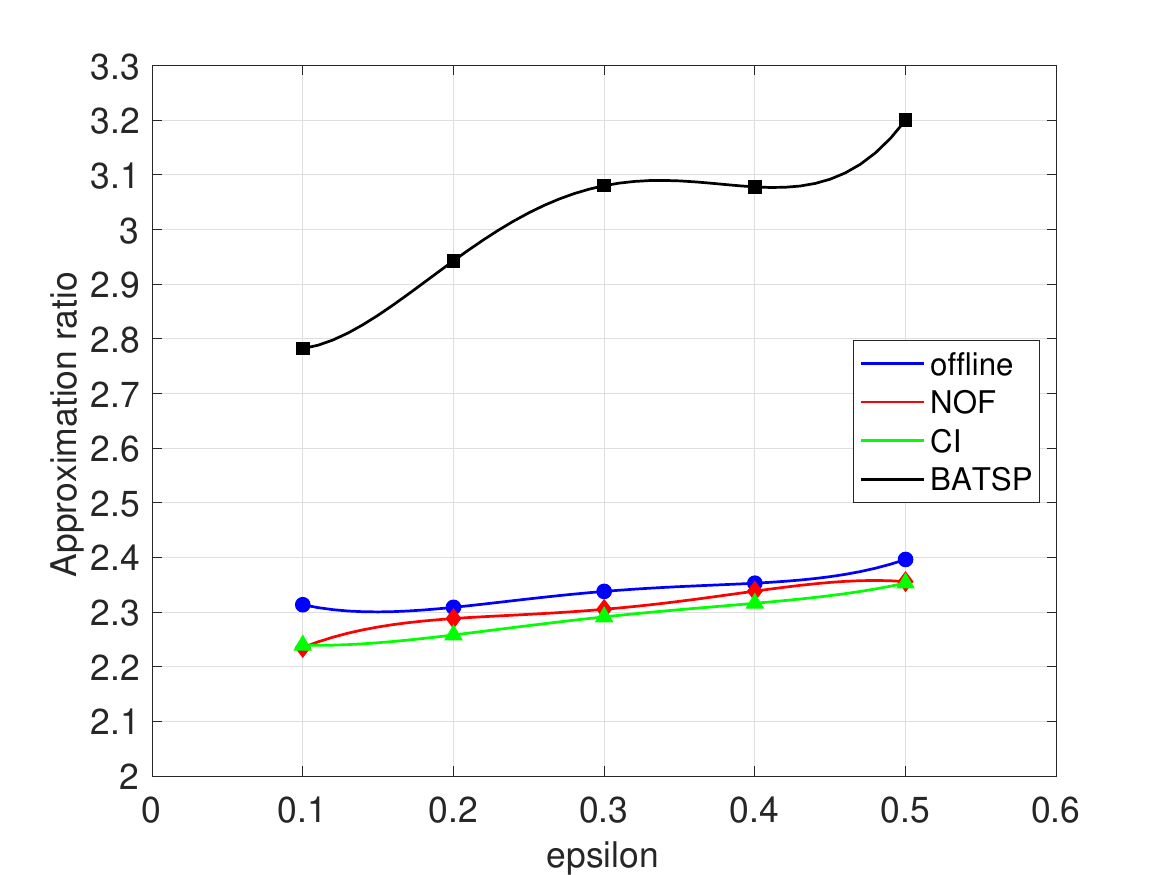}}
\end{minipage}
\begin{minipage}[b]{1\linewidth}
	\subfloat[object number = 25]{\label{fig:object25}
	\includegraphics[width=0.45\linewidth]{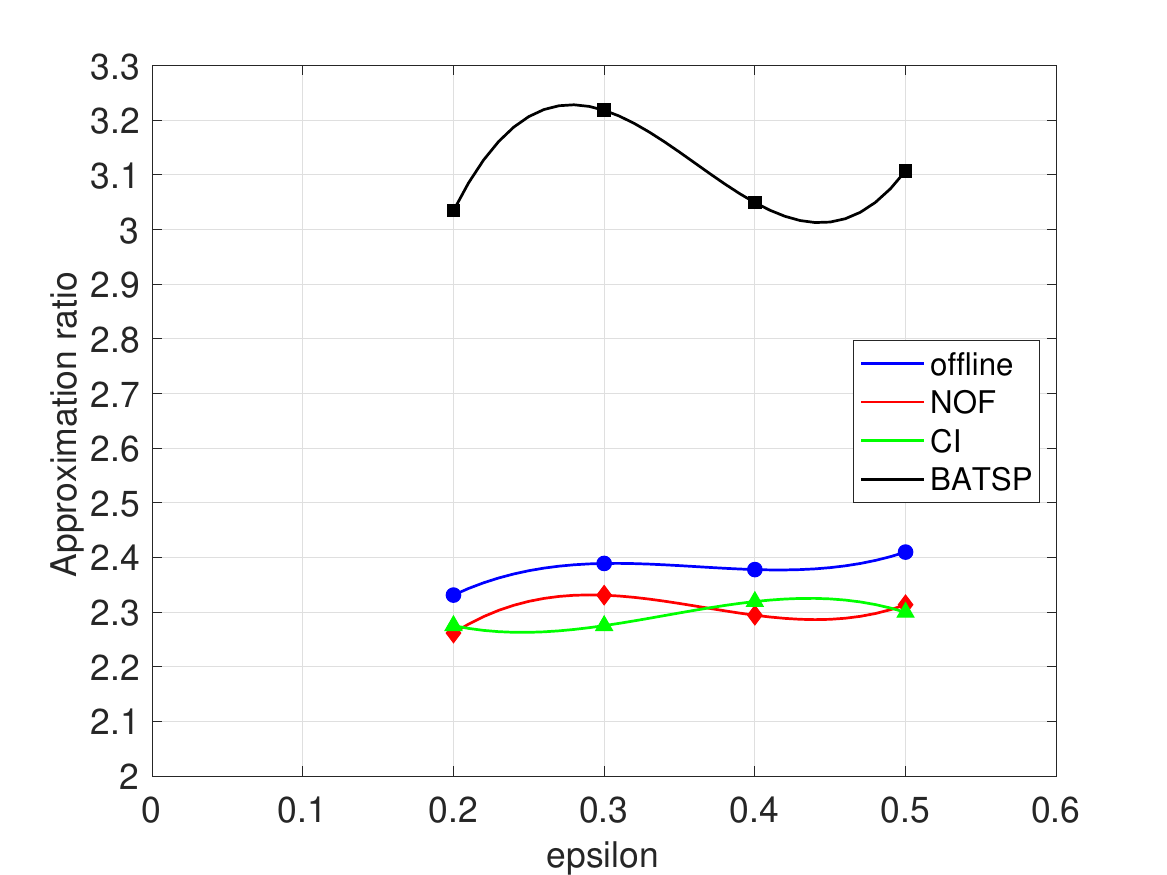}}
\end{minipage}
\caption{Approximation ratio related to obtained lower bound}
\label{fig:tolowerbound}
\end{figure}

\subsubsection{Approximation ratio compared to the best feasible solution} \label{sec:path_cost2}

Figure~\ref{fig:tobestresult} illustrates the ratios by dividing the cost of the path by the best feasible result of Gurobi, which are much lower than compared to the lower bound. Concretely, CI algorithm performs best when $n$ equals 5, 20 and 25, but the ratio surpasses that of the offline and NOF methods for $n=15$. Both NOF and offline methods are stable in all of the settings. In general, these three methods possess the ratio nearly 1.25, meaning that they are close to the best result computed by Gurobi. Moreover, BATSP method performs poorly compared to other methods, ranging from 1.22 to 1.35 when $n=5$, 1.5 to 1.64 when $n=10$ and $15$. For more complex situation, its ratio rises to around 1.73.

\begin{figure}[htbp]

\begin{minipage}[b]{1\linewidth}
	\subfloat[object number = 5]{\label{fig:gurobi5}
	\includegraphics[width=0.45\linewidth]{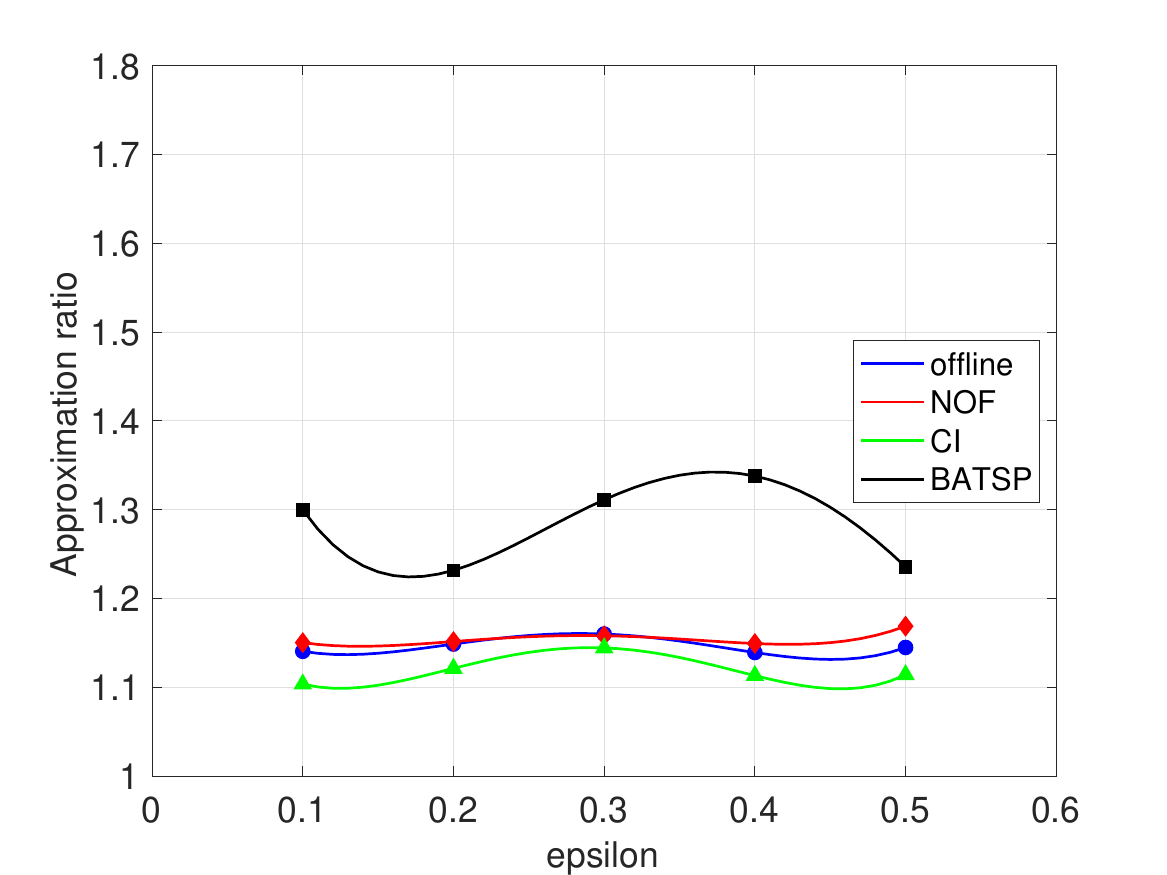}}
	\subfloat[object number = 10]{\label{fig:gurobi10}
	\includegraphics[width=0.45\linewidth]{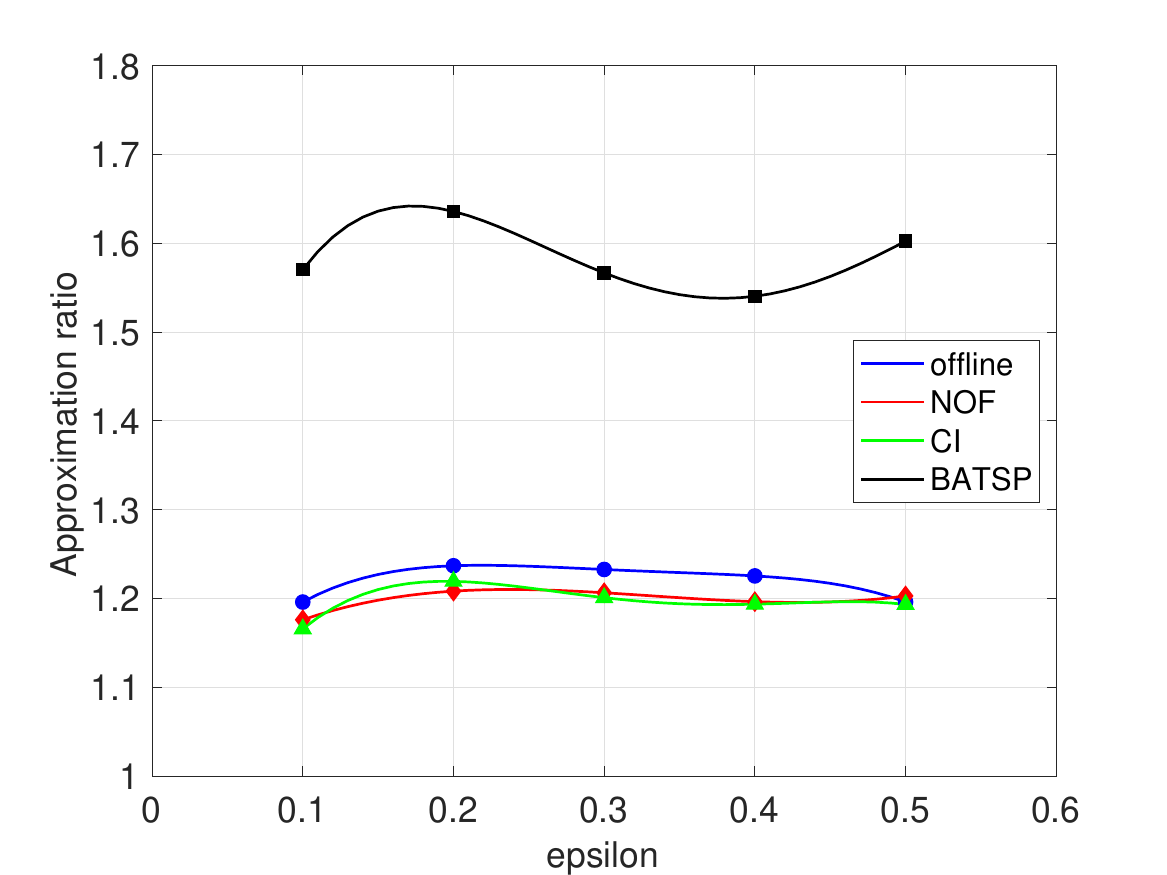}}
\end{minipage}
\\
\begin{minipage}[b]{1\linewidth}
	\subfloat[object number = 15]{\label{fig:gurobi15}
	\includegraphics[width=0.45\linewidth]{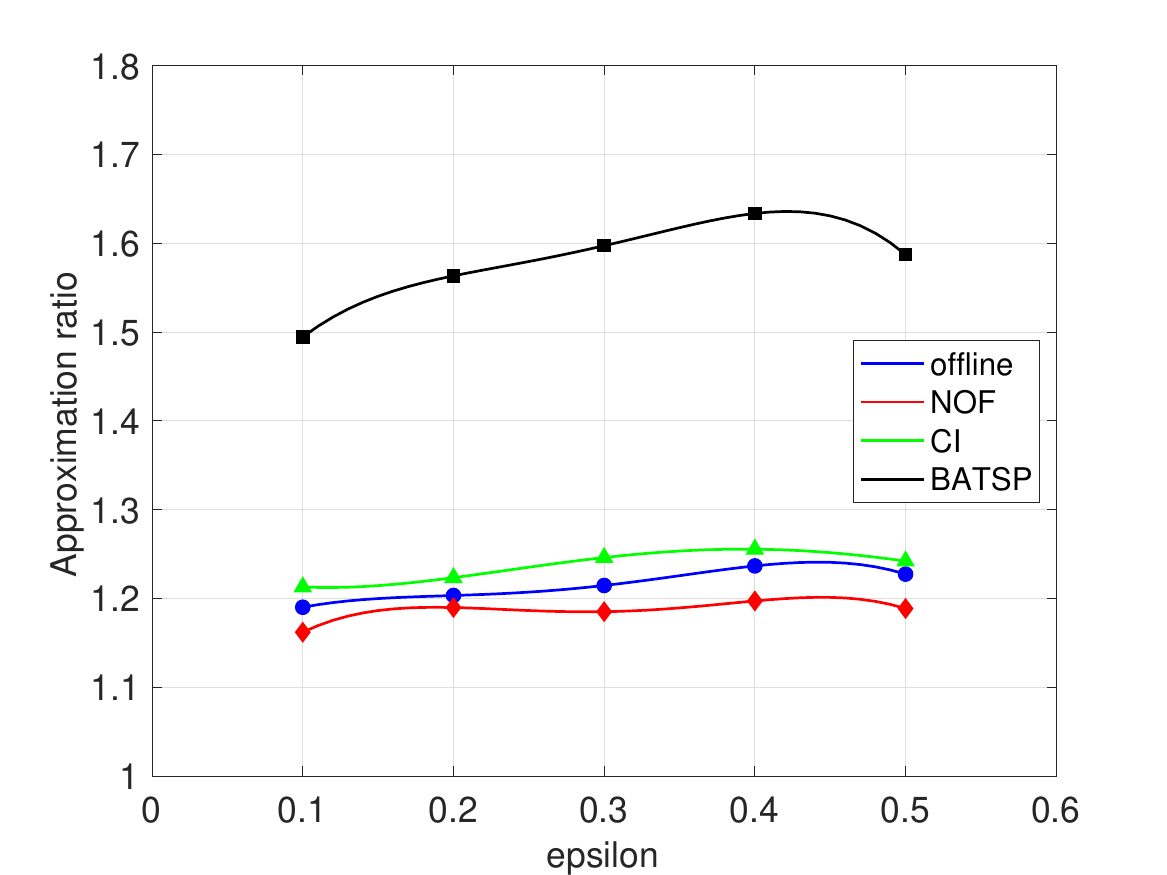}}
	\subfloat[object number = 20]{\label{fig:gurobi20}
	\includegraphics[width=0.45\linewidth]{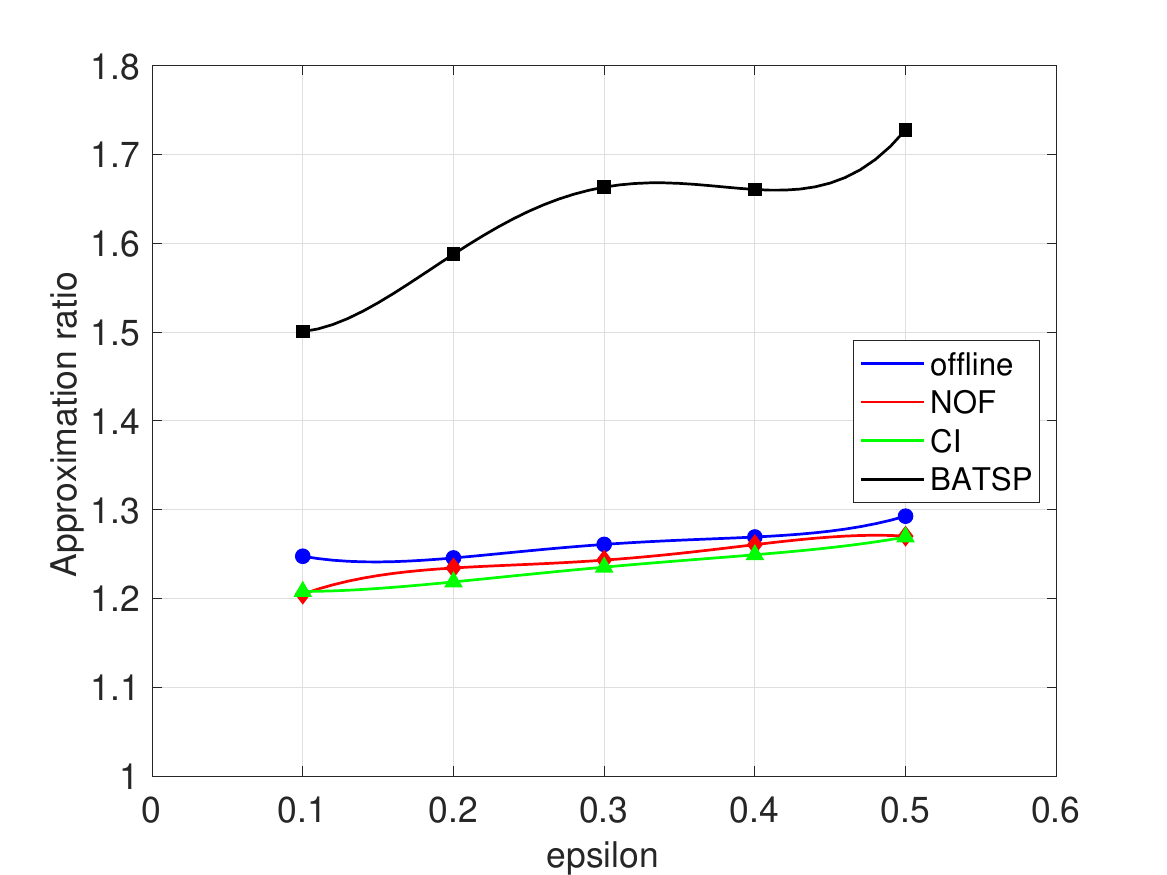}}
\end{minipage}
\\
\begin{minipage}[b]{1\linewidth}
	\subfloat[object number = 25]{\label{fig:gurobi25}
	\includegraphics[width=0.45\linewidth]{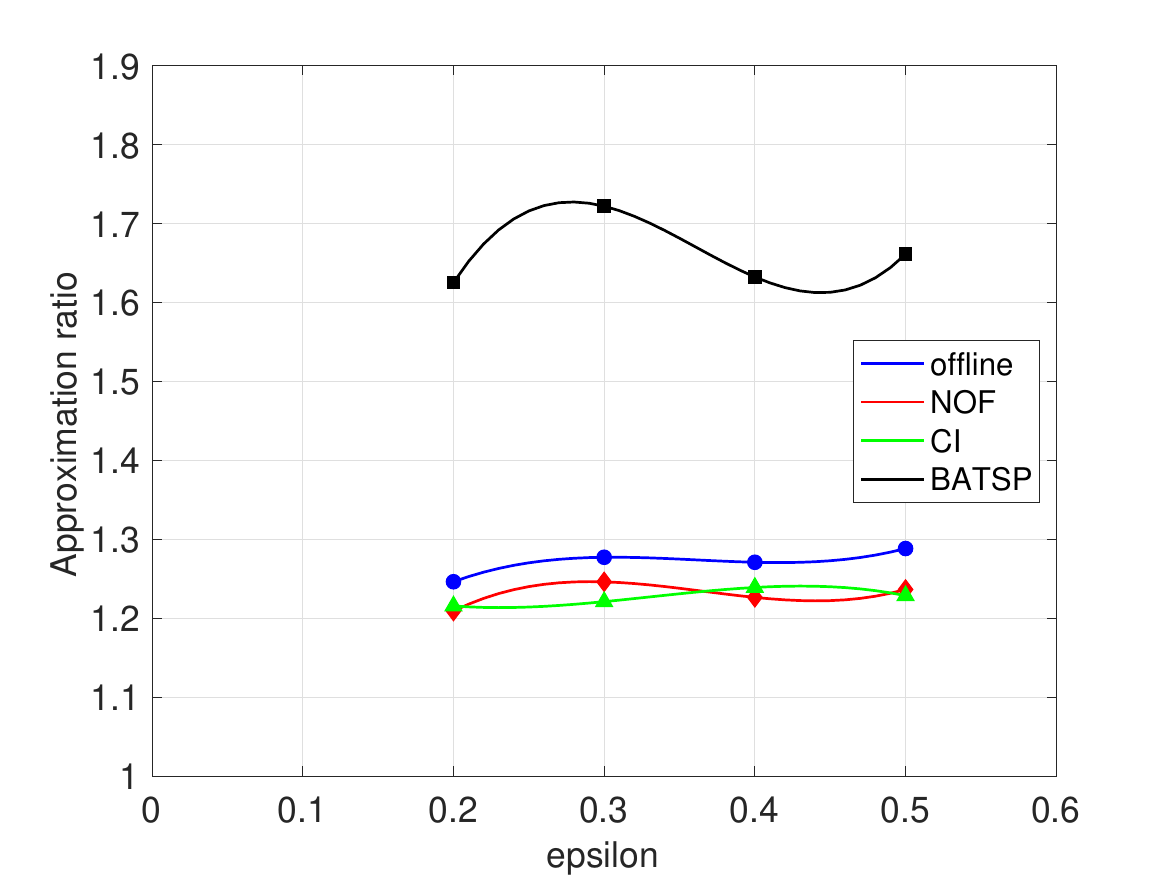}}
\end{minipage}

\caption{Approximation ratio compared to Gurobi best result }
\label{fig:tobestresult}
\end{figure}

\subsubsection{Running time} \label{sec:runningtime}
In addition to the path cost, the average running times are presented in Table~\ref{table:5objects} and \ref{table:epsilon02s} for all algorithm in different epsilon settings. All of the data units are seconds. Generally, the time cost surges exponentially with the increase of the number of objects or the decrease of epsilon, especially from 0.2 to 0.1. Among the four algorithms, the offline method takes the longest time to produce the results and NOF method in contrast boasts fastest speed. There is not much difference in the running time of CI and BATSP, with CI running slightly faster.

Concretely, for 5 number of objects, the average time of NOF is less than 0.2 second for all values of epsilon which is extremely efficient. The offline method takes 1.48 seconds with epsilon 0.3 whereas the remaining algorithms finish within 1 second with greater epsilon. However, the decrease of epsilon from 0.2 to 0.1 considerably rises the time of offline method to around 2240 seconds and CI and BATSP to around 987 seconds, around 200 times longer than that with epsilon 0.2. The change trend is similar for more objects. The offline method time with epsilon 0.2 rises to 302.63 seconds for 10 objects and further to more than a thousand seconds for 15, 20 and 25 objects. All algorithms have an acceptable running time when the number of objects is 25 and $\epsilon$ is greater than 0.2, but the NOF algorithm can achieve very close performance to other algorithms in an extremely short time.

\begin{table}[]
\centering
\caption{Running time for 5 objects (in seconds)}\label{table:5objects}
\begin{tabular}{crrrr}
\hline
$\epsilon$ & offline    & NOF        & CI          & BATSP \\ \hline
0.1 & 2240.09 & 0.16 & 881.40 & 986.77       \\ 
0.2 & 12.25 & 0.01 & 6.86 & 6.61       \\ 
0.3 & 1.48 & $<0.01$ & 0.77 & 0.72       \\ 
0.4 & 0.21 & $<0.01$ & 0.1 & $<0.01$        \\ 
0.5 & 0.09 & $<0.01$ & $<0.01$ & $<0.01$        \\ \hline
\end{tabular}
\end{table}

\begin{table}[]
\centering
\caption{Running time for $\epsilon=0.2$ (in seconds)}\label{table:epsilon02s}
\begin{tabular}{crrrr}
\hline
$n$ & offline    & NOF        & CI          & BATSP \\ \hline
5  & 12.25 & 0.01 & 6.86 & 6.61 \\ 
10  & 302.63 & 0.18  & 65.33  & 66.26     \\ 
15  & 1785.35 & 0.61  & 117.83 &  117.89       \\ 
20  & 3673.34 & 0.83  & 154.92  & 189.88       \\ 
25  & 7890.86 & 1.88  & 413.26  & 424.49      \\ \hline
\end{tabular}
\end{table}

\subsubsection{Approximation ratio with different n} \label{sec:stable appro ratio}
As illustrated in Figure~\ref{fig:stable_appro} where $\epsilon$ is set to 0.2, the performance of our algorithms, except BATSP, is more or less stable with increasing $n$. The stability demonstrates the robustness of our algorithms that they can be used in various settings. BATSP algorithm produces an approximation ratio similar to the other algorithms when $n$ equals 5, but it increases to around 1.6 when $n$ is greater than 10 and further surges to 1.75 when $n$ is 25. This sharp increase demonstrates that the performance of BATSP is inferior to the other three algorithms.

\vspace{1\in}
\begin{figure}[h]
\centering
\includegraphics[width=0.48\textwidth]{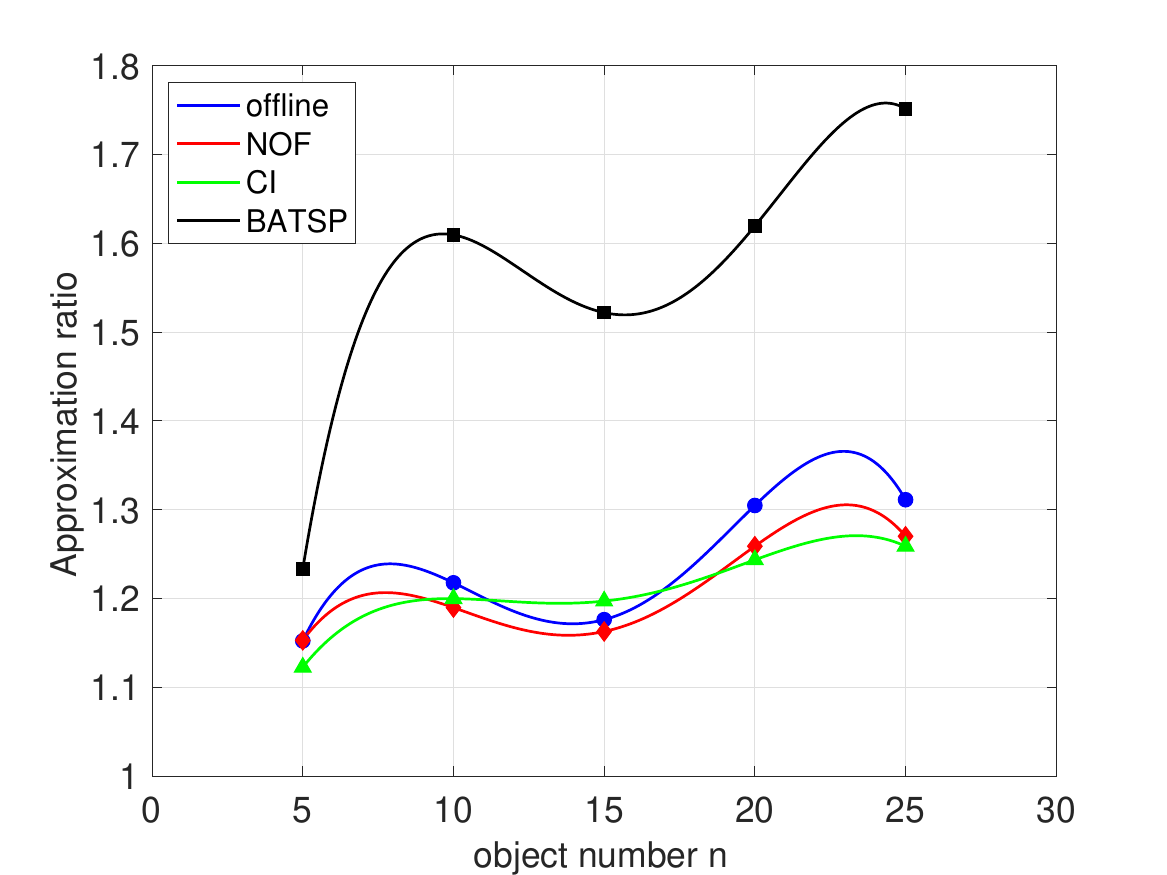}
\caption{Approximation ratio with $\epsilon=0.2$.}
\label{fig:stable_appro}
\vspace{-0.3in}
\end{figure}

\subsection{Discussion}
In general, there is a slightly rising trend in approximation ratio with the increase of $\epsilon$ when the object number is greater than 15, whereas in simple setting, the index is relatively stable. From Theorem~\ref{theorem1}, it is clear that a larger $\epsilon$ leads to a higher approximation ratio, since the error in the grid increases and there are fewer observation points that can be visited. We suppose that for simple cases, the effect of error in fewer visited observation points offsets that of Theorem~\ref{theorem1}.

The experiments also show that the offline method actually performs much better than its theoretical performance and NOF and CI achieve similar results, whereas from which BATSP's approximation ratio is apparently distinct. However, BATSP seeks to follow the best path whenever it makes a step forward. The reason could be that BATSP is based on the approximation algorithms during the whole process and there is approximation error for each step. Furthermore, the complex situation with 10 or more objects makes it more difficult for BATSP to plan its path via the TSP algorithm proposed by Christofides~\cite{1.5TSP}. Finally, it might be the accumulated error that leads to BATSP relatively unsatisfactory performance. 

Furthermore, it should be pointed out that the offline method, CI, and NOF achieve similar path cost with that of the Gurobi optimizer in much less time. When $\epsilon$ is 0.2, the proposed algorithms use within one second to several thousand seconds while the Gurobi optimizer requires 30000 seconds, which is the time limit defined earlier. Nevertheless, the path costs of these three methods are only 10 to 20 percent greater than the Gurobi best feasible results when $n$ is smaller than 25. Both the performance and the speed of our algorithms can be well demonstrated by these results, as shown in Figure~\ref{fig:tobestresult} and Figure~\ref{fig:stable_appro}.

\section{Conclusion}
In this work, we focus on planning a path with the lowest length for an UAV that aims to observe a set of objects. Each object is assumed to be a rectangle and each side should be observed while satisfying a quality constraint. We present an offline algorithm with an approximation ratio of $(1+\epsilon)(2+2n)$ and three online algorithms. This is the first work that presents path planning algorithms with an approximation ratio. Numerical results show that the offline algorithm and two online algorithms actually achieves an approximation ratio of around 2.1. Our algorithms can also yield comparable costs to an integer linear programming optimizer by around 1.2 times of it at a very fast speed, yet the latter uses 30000 seconds.

\end{document}